\documentclass[letterpaper,11pt]{article} 
\usepackage{times}  
\usepackage{helvet}  
\usepackage{courier}  
\usepackage[hyphens]{url} 
\usepackage{graphicx} 
\urlstyle{rm} 
  
\usepackage{natbib}
\usepackage{caption} 
\frenchspacing

\usepackage{algorithm}
\usepackage{algorithmic}

\usepackage{newfloat}
\usepackage{listings}
\DeclareCaptionStyle{ruled}{labelfont=normalfont,labelsep=colon,strut=off} 
\lstset{%
	basicstyle={\footnotesize\ttfamily},
	numbers=left,numberstyle=\footnotesize,xleftmargin=2em,
	aboveskip=0pt,belowskip=0pt,%
	showstringspaces=false,tabsize=2,breaklines=true}
\floatstyle{ruled}
\newfloat{listing}{tb}{lst}{}
\floatname{listing}{Listing}

\setcounter{secnumdepth}{1} 

\usepackage[letterpaper, total={6in, 8in}]{geometry}
\usepackage[colorlinks=true, allcolors=blue!50!black]{hyperref}

\usepackage{amssymb}
\usepackage{xcolor}
\usepackage{amsmath}
\usepackage{amsthm}
\usepackage{xspace}
\usepackage{cleveref}
\usepackage{caption}
\usepackage{subcaption}
\usepackage{tikz}
\usepackage{thm-restate}
\usepackage{natbib}

\definecolor{maincolor}{RGB}{35, 75, 166}
\definecolor{color2}{RGB}{162,173,0}
\definecolor{color3}{RGB}{227,114,34}
\definecolor{color4}{RGB}{202,033,063}

\newcommand{\med}{\ensuremath{\mathrm{med}}}
\newcommand{\mov}{\ensuremath{\mathcal{A}^{\mathcal{F}}}\xspace}
\newcommand{\A}{\ensuremath{\mathcal{A}}\xspace}

\newcommand{\egcitep}[1]{\citep[e.g.,][]{#1}}

\newtheorem{theorem}{Theorem}
\newtheorem{corollary}[theorem]{Corollary}
\newtheorem{lemma}[theorem]{Lemma}

\newtheorem{definition}[theorem]{Definition}
\newtheorem{proposition}[theorem]{Proposition}
\renewcommand{\mu}{\overline{P}}
\allowdisplaybreaks

\date{}

\usepackage{authblk}

\author[1]{\large Rupert Freeman}
\author[2]{\large Ulrike Schmidt-Kraepelin \normalsize}

\affil[1]{\small Darden School of Business, University of Virginia, VA, USA}
\affil[2]{\small TU Eindhoven, The Netherlands}

\title{\fontsize{16}{14}\textbf{Project-Fair and Truthful Mechanisms for Budget Aggregation}}

\usepackage{bibentry}

\begin{document}

\maketitle

\begin{abstract}
We study the budget aggregation problem in which a set of strategic voters must split a finite divisible resource (such as money or time) among a set of competing projects. Our goal is twofold: 
We seek truthful mechanisms that provide fairness guarantees to the projects. 
{For the first objective, we focus on the class of moving phantom mechanisms, which are -- to this day -- essentially the only known truthful mechanisms in this setting.}
For project fairness, we consider the mean division as a fair baseline, and bound the maximum difference between the funding received by any project and this baseline. We propose a novel and simple moving phantom mechanism that provides optimal project fairness guarantees. As a corollary of our results, we show that our new mechanism minimizes the $\ell_1$ distance to the mean for three projects and gives the first non-trivial bounds on this quantity for more than three 
projects.
\end{abstract}

\section{Introduction}
In the budget aggregation problem, a fixed amount of a divisible resource (such as money or time) must be allocated among $m$ competing projects based on the divisions proposed by a set of $n$ voters. For example, in participatory budgeting~\citep{cabannes2004participatory,aziz2021participatory}, citizens vote directly on how a public budget should be divided between a set of public projects. Other examples might include a university department allocating discretionary funding among different initiatives or a group of conference organizers deciding how to divide time among activities such as talks, posters, and social events.

A common and natural solution to this problem is to divide the resource according to the (arithmetic) mean of the votes, guaranteeing that the funding received by each project is proportional to the total support that project receives from the voters. However, using the mean as a budget aggregation rule is not strategyproof.\footnote{We use the terms ``strategyproof'' and ``truthful'' interchangeably.} 
For example, voters can overstate their preference for their favorite projects to bring the funding for that project towards the voter's true preference.

In pursuit of strategyproof mechanisms, \citet{freeman2021truthful} defined the class of \emph{moving phantom mechanisms}, a high-dimensional generalization of the well-known class of generalized median mechanisms for strategyproof aggregation in one dimension~\citep{moulin1980strategy}. Moving phantom mechanisms are strategyproof when voters have disutilities given by the $\ell_1$ distance between their vote and the aggregate division. One particularly natural mechanism, which turns out to be a member of this class, is the one that minimizes the sum of disutilities of the voters~\citep{lindner2008midpoint,goel2019knapsack}. Although this rule can be effective, it can also produce outcomes that differ significantly from the mean (for intuition, consider the median in one dimension). If the mean is considered a desirable outcome, then it would be beneficial to discover strategyproof mechanisms that are more aligned with it.

\citet{freeman2021truthful} introduced the \emph{Independent Markets} mechanism, which is guaranteed to agree with the mean when all voters want to fund only a single project (a mechanism with this property is said to be \emph{proportional}). However, as \citet{caragiannis2022truthful} showed in subsequent work, on other inputs it may produce outcomes that are far from the mean according to the $\ell_1$ distance. They propose a different moving phantom mechanism, the \emph{Piecewise Uniform mechanism}, which never outputs budget divisions that have an $\ell_1$ distance from the mean larger than $\frac{2}{3}$ when there are only three projects.\footnote{The proof relies on solving a nonlinear program, so this bound is subject to a small error.} No positive results are known for higher numbers of projects.

\citet{caragiannis2022truthful} measure the quality of an outcome by its $\ell_1$ distance to the mean. However, this measure does not capture how the deviation from the mean is distributed over the projects. For instance, suppose that the mean division over four projects is given by $(70\%,10\%,10\%,10\%)$ and consider two aggregate divisions: $a = (50\%,30\%,10\%,10\%)$ and $b=(60\%,20\%,0\%,20\%)$. For both aggregates, the $\ell_1$ distance to the mean is $40\%$. However, while in division $a$, the first project is being underfunded relative to the mean by $20\%$ of the budget, in division $b$, no project is over or underfunded by more than $10\%$ of the budget. In this paper, we complement \citeauthor{caragiannis2022truthful}'s approach by studying the $\ell_\infty$ distance to the mean, 
which can be interpreted as a measure of fairness between projects. 
Taking the mean to be a project's ``entitlement,'' by how much does the allocation of any project exceed or fall short of this value?

Given that the projects themselves (or the entities behind them) are typically stakeholders in budget aggregation systems, project-fairness guarantees are important to maintain the confidence of the projects in the system.

\paragraph{Our Contributions.}
We introduce the notion of project fairness for the budget aggregation problem. While our definition is technically similar to the proportionality measure of \citet{caragiannis2022truthful} in that we are interested in the worst case distance (according to some metric) from the mean, the two metrics can differ substantially in which outcomes they prefer. That said, they are related in that an upper bound on the $\ell_\infty$ distance implies an upper bound on the $\ell_1$ distance and vice versa; see \Cref{sec:l1}.

We focus on project fairness for the class of moving phantom mechanisms. Whether there exist (anonymous, neutral, and continuous) strategyproof mechanisms outside of this class remains an intriguing open question. As our main result, we define the \emph{Ladder mechanism}, a new moving phantom mechanism that is guaranteed to output a budget division with $\ell_\infty$ distance from the mean equal to at most $\frac{1}{2}-\frac{1}{2m}$. This bound is tight for moving phantom mechanisms. We additionally show that, while our mechanism may underfund a project by this amount relative to the mean, it never overfunds a project by more than $1/4$, a property that we show to be common to all proportional mechanisms.

As a corollary of our result, we show that our new mechanism guarantees an $\ell_1$ distance from the mean of no more than $\frac{2}{3}$ for instances with three projects, which matches the known lower bound. This closes a (very small) gap that was left open by \citet{caragiannis2022truthful}, who obtained an upper bound of $\frac{2}{3} + 10^{-5}$ by a complex proof that involved characterizing worst case instances and then solving a non-linear program. In contrast, our proof is combinatorial and relatively simple in comparison. We additionally obtain non-trivial bounds on the $\ell_1$ distance from the mean for 4, 5, and 6 projects. Prior to our work, no mechanisms were known to guarantee an $\ell_1$ distance less than a trivial upper bound for more than 3 projects.

\paragraph{Related Work.}
Portioning, also known as (unbounded) divisible participatory budgeting, is an umbrella term for problems in which a continuous divisible resource must be divided among alternatives. The budget aggregation problem is an example of portioning where voters submit complete budget allocation proposals; in addition to the papers discussed above, \citet{elkind2023settling} perform an axiomatic analysis of various rules in this setting, and find that the mean performs well relative to the other rules they consider. In particular, it is the only one of the considered rules to satisfy the score representation axiom, a natural proportionality property. \citet{goyal2023low} study mechanisms with low sample complexity in terms of their social welfare approximation guarantees.
Other variants of portioning include voters submitting ordinal preferences~\citep{airiau2023portioning}, dichotomous preferences~\egcitep{bogomolnaia2005collective,brandl2021distribution,michorzewski2020price}, or more general cardinal utility functions over alternatives~\citep{fain2016core,wagner2023strategy}. For an overview of other models and additional related work in participatory budgeting, we refer to the survey of~\citet{aziz2021participatory}.

For the special case of two projects,\footnote{Since we have a normalization constraint, the two-project case has only one degree of freedom.} moving phantom mechanisms reduce to the generalized median mechanisms of~\citet{moulin1980strategy}, which take the median of $n+1$ fixed ``phantom'' votes and the $n$ submitted votes. These mechanisms have been extensively studied, most notably in the context of strategyproof facility location~\citep{procaccia2013approximate,aziz2021strategyproof}. Connections between generalized median mechanisms and mean approximation in one dimension have also been made previously in various contexts~\citep{renault2005protecting,renault2011assessing,caragiannis2016truthful,jennings2023new,caragiannis2022truthful}. All of these papers identify the uniform phantom mechanism, which places phantom votes at uniform intervals of $1/n$, as the most desirable generalized median 
from this perspective. As with other proportional moving phantom mechanisms in the literature, the Ladder mechanism draws heavy inspiration thereof. 

Alternative multidimensional aggregation settings that do not require votes and outcomes to sum to one exist in the literature~\egcitep{barbera1993generalized,barbera1997voting,border1983straightforward,peters1992pareto}. Typically, strategyproof mechanisms in these models can be decomposed into one-dimensional mechanisms taking a generalized median in every coordinate, which would violate our normalization requirement. Accordingly, these problems are very different to ours from a technical perspective.

\section{Preliminaries}
\label{sec:prelim}

{For any $k \in \mathbb{N}$, let $[k] = \{1,\dots,k\}$ and $[k]_0 = \{0,1,\dots,k\}$. We denote by $N = [n]$ the set of voters and by $M=[m]$ the set of projects. 
For any $m \in \mathbb{N}$, we define $\Delta^{(m)} = \{q \in [0,1]^{m} \mid \sum_{j \in [m]} q_j=1\}$ to be the standard simplex. For a set of projects $M$, each voter indicates their \emph{ideal} budget distribution over the projects, i.e., an element of $\Delta^{(m)}$. Formally, these preferences are summarized in a preference profile $P$, which is a matrix $P \in [0,1]^{n \times m}$ with $(P_{ij})_{j \in [m]} \in \Delta^{(m)}$ for every $i \in N$. A budget aggregation mechanism $\mathcal{A}$ takes as input a preference profile $P$ 
and outputs an element from $\Delta^{(m)}$. For a given preference profile $P$, and for any $j \in [m]$, let $\overline{P}_j = \frac{1}{n}\sum_{i \in [n]}P_{ij}$ be the average support of this project.

\smallskip
\textbf{Moving Phantom Mechanisms} 
For $n \in \mathbb{N}$, a \emph{phantom system} $\mathcal{F}_n=\{f_k : k \in [n]_0\}$ is a family of functions, where $f_k : [0,1] \rightarrow [0,1]$ is a continuous, non-decreasing function with $f_{k}(0) = 0$ and 
$f_k(1)= 1$
for each $k$, and $f_0(t) \geq f_1(t) \geq \dots \geq f_n(t)$ for all $t \in [0,1]$. Then, for any preference profile $P$, let $t^* \in [0,1]$ be chosen such that 
$$
\sum_{j \in [m]}\med(f_0(t^*),\dots,f_n(t^*),P_{1j},\dots, P_{nj})=1,
$$ 
where ``med'' is the median. Then, we define $$\mathcal{A}^{\mathcal{F}_n}(P)_j = \med(f_0(t^*),\dots,f_n(t^*),P_{1j},\dots, P_{nj})$$ 
and say that $\mathcal{A}^{\mathcal{F}_n}$ \emph{reaches normalization at} $t^*$. While $t^*$ is not always unique, the resulting budget allocation is unique. 

Since phantom systems are defined for fixed $n \in \mathbb{N}$, we are interested in \emph{families} of phantom systems, $\mathcal{F} = \{\mathcal{F}_n \mid n \in \mathbb{N}\}$, and define the \emph{moving phantom mechanism} \mov by applying mechanism $\mathcal{A}^{\mathcal{F}_n}$ to any profile with $n$ voters.

Following~\citet{freeman2021truthful}, we will pictorially represent (snapshots of) moving phantom mechanisms in the following way (see, for example, Figure~\ref{fig:first-example}). Projects are represented by vertical bars, with voter reports indicated by black horizontal line segments. The vertical position of the segment indicates the report $P_{ij}$. Phantom positions are indicated by solid blue lines. On every project, the median of the voter and phantom positions is indicated by a rectangle.

 \smallskip

\textbf{Strategyproofness} For any $q \in \Delta^{(m)}$, the disutility of a voter $i$ is assumed to be the $\ell_1$ distance from $q$ to its ideal point, i.e., $\sum_{j \in [m]}|P_{ij}-q_j|$. 
It is known that all moving phantom mechanisms are strategyproof in the sense that no voter can decrease the $\ell_1$ distance from the aggregate to their ideal distribution by reporting a distribution that is not their ideal one~\citep{freeman2021truthful}. Note that strategyproofness of moving phantom mechanisms rests crucially on the assumption of $\ell_1$ (dis)utilities. We refer the reader to \citet{nehring2019resource} and \citet{goel2019knapsack} for natural interpretations of this utility model in the budgeting setting (and to \citet{varloot2022level} for a setting where a utility model other than $\ell_1$ is more appropriate).

\smallskip
\textbf{Proportionality} We say that a voter $i \in N$ is \emph{single minded} if $P_{ij} \in \{0,1\}$ for all $j \in M$. \citet{freeman2021truthful} define a budget aggregation mechanism to be \emph{proportional} if, for any profile consisting of single-minded voters only, it holds that $\A(P)_j = \overline{P}_j$ for all $j \in M$.
\smallskip

We now introduce the main novel concept of the paper, i.e., a guarantee for the maximum deviation of the funding received by any project from the funding given to this project by the mean aggregation function (which is not strategyproof). Since these bounds can be made more precise by parameterizing them by $n$ and $m$, we express the resulting bounds in terms of a function $\alpha(n,m)$. 

\smallskip

\textbf{Project Fairness}
For a function $\alpha: \mathbb{N} \times \mathbb{N}\rightarrow \mathbb{R}$, we say that a budget aggregation mechanism is $\alpha$-project fair if, for any preference profile $P$ on $n$ voters and $m$ projects, and any $j \in [m]$, it holds that $|\A(P)_j - \overline{P}_j| \leq \alpha(n,m)$. In addition, we say that a mechanism overfunds by at most $\alpha$ if $\A(P)_j - \overline{P}_j \leq \alpha(n,m)$ for all $j \in [m]$, and it underfunds by at most $\alpha$ if $\overline{P}_j - \A(P)_j \leq \alpha(n,m)$ for all $j \in [m]$. Clearly, a mechanism is $\alpha$-project fair if and only if it overfunds by at most $\alpha$ and underfunds by at most $\alpha$. For simplicity, any function $\alpha$ in this paper maps from $\mathbb{N} \times \mathbb{N}$ to $\mathbb{R}$. 
}

\section{Lower Bounds}

In this section, we provide several lower bounds on the $\alpha$-project fairness for (subclasses of) moving phantom mechanisms. This paves the way to the introduction of a new mechanism guaranteeing optimal project fairness. 
We say that a budget aggregation mechanism is \textit{zero unanimous} if it never funds a project that every voter agrees should receive zero funding, i.e., $P_{ij}=0$ for all $i \in [n]$ implies that $\A(P)_j = 0$.
Zero unanimity is a restriction of the score unanimity condition of \citet{elkind2023settling}, which says that whenever all
agents unanimously agree on the funding for a particular project, then
this project should receive exactly that level of funding.
In \Cref{lem:overfund-upperbound}, we start by providing a lower bound on the overfunding guarantee of any zero-unanimous moving phantom mechanism.\footnote{
Dropping zero unanimity allows for moving phantom mechanisms with better overfunding guarantees. 
For example, the mechanism that always outputs $\frac{1}{m}$ for every project never overfunds by more than $\frac{1}{m}$. That said, this mechanism clearly suffers from high underfunding and completely ignores the voters' preferences.}

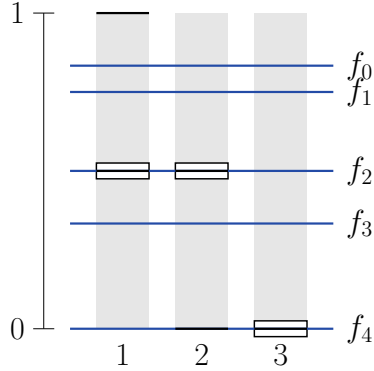
\begin{figure}
    \centering
    \scalebox{0.7}{
    \begin{tikzpicture}
        \draw (0,0) -- (0,6);  
        \draw (-.2,0) -- (0.2,0); 
        \draw (-.2,6) -- (0.2,6); 
        \node at (-0.5,0) {\LARGE$0$}; 
        \node at (-0.5,6) {\LARGE$1$}; 
        \filldraw[fill=black!10,draw=none] (1,0) rectangle (2,6); 
        \filldraw[fill=black!10,draw=none] (2.5,0) rectangle (3.5,6); 
        \filldraw[fill=black!10,draw=none] (4,0) rectangle (5,6); 
        \node at (1.5,-0.5) {\LARGE$1$};
        \node at (3,-0.5) {\LARGE$2$};
        \node at (4.5,-0.5) {\LARGE$3$};

        \draw[fill=white,thick] (1,2.85) rectangle (2,3.15); 
        \draw[fill=white,thick] (2.5,2.85) rectangle (3.5,3.15); 
        \draw[fill=white,thick] (4,-.15) rectangle (5,.15); 

        \draw[very thick,maincolor] (0.5,5) -- (5.5,5); 
        \node at (6,5) {\LARGE$f_0$}; 
        \draw[very thick,maincolor] (0.5,4.5) -- (5.5,4.5); 
        \node at (6,4.5) {\LARGE$f_1$}; 
        \draw[very thick,maincolor] (0.5,2) -- (5.5,2); 
        \node at (6,2) {\LARGE$f_3$}; 
        \draw[very thick,maincolor] (0.5,0) -- (5.5,0); 
        \node at (6,0) {\LARGE$f_4$}; 
    
        \draw[very thick,maincolor] (0.5,3) -- (5.5,3); 
        \node at (6,3) {\LARGE$f_2$}; 

        \draw[very thick] (1,6) -- (2,6); 
        \draw[very thick] (1,3) -- (2,3); 
        \draw[very thick] (2.5,0) -- (3.5,0); 
        \draw[very thick] (2.5,3) -- (3.5,3); 
        \draw[very thick] (4,0) -- (5,0); 
        
    \end{tikzpicture}}
    \caption{Example from \Cref{lem:overfund-upperbound} for $n=4$ and $m=3$. 
    See~\Cref{sec:prelim}
    for an explanation of how to read our figures. Note that the black line segments each represent two voters who both make the same report.}
    \label{fig:first-example}
\end{figure}

\begin{proposition} \label{lem:overfund-upperbound}
    Let \mov be a zero-unanimous moving phantom mechanism. Then, there exists no $\alpha$ satisfying 
    \begin{itemize}
        \item $\alpha(n,m)<\frac{1}{4}$ for any $n,m \in \mathbb{N}$, where $n$ is even, or 
        \item $\alpha(n,m)<\frac{1}{4}\Big(1-\frac{1}{n}\Big)$ for any $n,m \in \mathbb{N}$, where $n$ is odd, 
    \end{itemize}
    such that \mov overfunds by at most $\alpha$.
\end{proposition}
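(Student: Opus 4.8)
The plan is to exhibit, for each $n$ and each $m\ge 2$, a single preference profile on which \emph{every} zero-unanimous moving phantom mechanism is forced to overfund some project by exactly the claimed amount; no continuity or intermediate-value argument is needed, only the normalization identity together with a careful median count. Set $k=\lfloor n/2\rfloor$ and take the profile in which $k$ voters report $(1,0,0,\dots,0)$, the remaining $n-k$ voters report $(\tfrac12,\tfrac12,0,\dots,0)$, and projects $3,\dots,m$ receive $0$ from everyone (for $m=3$ this is exactly the profile of \Cref{fig:first-example}; for $m=2$ the last clause is vacuous). A direct computation gives $\overline{P}_2=(n-k)/(2n)$, which equals $\tfrac14$ when $n$ is even and $(n+1)/(4n)$ when $n$ is odd.

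First I would use zero unanimity to collapse the instance to two projects. Each project $j\in\{3,\dots,m\}$ receives $0$ from all voters, so $\A(P)_j=0$; since $\A(P)_j=\med(f_0(t^*),\dots,f_n(t^*),0,\dots,0)=f_n(t^*)$ at the reached normalization $t^*$, this forces $f_n(t^*)=0$ and hence $\A(P)_1+\A(P)_2=1$ (for $m=2$ the latter identity holds automatically). Writing $\phi_j=f_j(t^*)$, it then remains to analyze the two medians $\A(P)_1=\med(\phi_0,\dots,\phi_n,1^{(k)},(\tfrac12)^{(n-k)})$ and $\A(P)_2=\med(\phi_0,\dots,\phi_n,0^{(k)},(\tfrac12)^{(n-k)})$, where the superscripts indicate multiplicities.

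The heart of the argument is to show $\A(P)_2=\tfrac12$. Let $a=|\{j:\phi_j\ge\tfrac12\}|$ and $b=|\{j:\phi_j>\tfrac12\}|$, so $b\le a$, and $b\le n$ because $\phi_n=0$. Counting the entries on each side of $\tfrac12$ among the $2n+1$ values locates the $(n+1)$-st order statistic and yields three thresholds: $\A(P)_2<\tfrac12$ only if $a\le k$; $\A(P)_2>\tfrac12$ would require $b\ge n+1$ and is therefore impossible, so $\A(P)_2\le\tfrac12$ always; and symmetrically $\A(P)_1>\tfrac12$ only if $b>n-k$. Now suppose toward a contradiction that $\A(P)_2<\tfrac12$. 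Then $a\le k=\lfloor n/2\rfloor$, whence $b\le a\le\lfloor n/2\rfloor\le n-k$, which rules out $\A(P)_1>\tfrac12$ and gives $\A(P)_1+\A(P)_2<1$, contradicting normalization. Hence $\A(P)_2=\tfrac12$, so project $2$ is overfunded by $\tfrac12-\overline{P}_2$, namely $\tfrac14$ for even $n$ and $\tfrac14(1-\tfrac1n)$ for odd $n$; since these values are attained exactly, any $\alpha$ bounding overfunding must satisfy the stated inequalities, which is the claim.

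The main obstacle I anticipate is purely book-keeping: the entries equal to $\tfrac12$ — the $n-k$ voter reports together with any phantoms sitting exactly at $\tfrac12$ — must be handled with care when locating the median, and one must verify that the regimes ``median below, at, or above $\tfrac12$'' are delineated by precisely the inequalities in $a$ and $b$ claimed above. Everything else is forced by the two structural facts $\phi_n=0$ (from zero unanimity) and $b\le a\le\lfloor n/2\rfloor$ in the offending case, which together make the identity $\A(P)_1+\A(P)_2=1$ unsatisfiable unless $\A(P)_2=\tfrac12$.
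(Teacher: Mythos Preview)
Your proposal is correct and uses exactly the same profile as the paper; your order-statistic count at the realized normalization time $t^*$ is simply a more explicit version of the paper's one-line assertion that ``$\mov$ reaches normalization when $f_{\lfloor n/2\rfloor}(t)=\tfrac12$, returning $\mov(P)_1=\mov(P)_2=\tfrac12$.'' The only loose thread is that for $m=2$ you invoke $\phi_n=0$ (hence $b\le n$) without having established it, but this is immaterial to the proposition: your contradiction argument already yields $\A(P)_2\ge\tfrac12$, and that inequality alone gives the required overfunding of at least $\tfrac14$ (resp.\ $\tfrac14(1-\tfrac1n)$).
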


\begin{proof}
    For any $n,m \in \mathbb{N}$, consider the instance in which the voters in $N_1 = [\lfloor n/2 \rfloor]$ report $P_{i1}=1$ and $P_{ij}=0$ for all $j\in M \setminus \{1\}$ and the voters in $N \setminus N_1$ report $P_{i1}=P_{i2}=1/2$ and $P_{ij}=0$ $j\in M \setminus \{1,2\}$.
    By zero unanimity, $\mov(P)_j = 0$ for all $j \in M\setminus\{1,2\}$, and therefore $\mov$ reaches normalization when $f_{\lfloor n/2 \rfloor}(t)=1/2$, returning the budget distribution $\mov(P)_1=\mov(P)_2=1/2$.
    We refer to \Cref{fig:first-example} for an illustration of the case $n=4$ and $m=3$. 
    Now, when $n$ is even, it holds that $\overline{P}_2 = \frac{1}{4}$ and therefore $\mov(P)_2 - \overline{P}_2 = 1/4$. When $n$ is odd, we get that $\overline{P}_2 = \frac{1}{4}\big(1 + \frac{1}{n}\big)$ and therefore $\mov(P)_2 - \overline{P}_2 = \frac{1}{4}\big(1 - \frac{1}{n}\big)$. 
\end{proof}

We continue by providing a lower bound on the underfunding guarantee that any moving phantom mechanism can provide. To this end, we use an example provided by \citet{caragiannis2022truthful}.

\begin{proposition} \label{lem:underfund-lowerbound}
        Let \mov be a moving phantom mechanism. Then, there exists no $\alpha$ such that 
        \begin{itemize}
            \item $\alpha(n,m)<\frac{1}{2}\Big( 1 - \frac{1}{m}\Big) \text{ for any } n,m \in \mathbb{N}$, $n$ even, or 
            \item 
            $\alpha(n,m)<\frac{1}{2}\Big(1-\frac{1}{m}\Big)\Big(1-\frac{1}{n}\Big) \text{ for } n,m \in \mathbb{N}$, $n$ odd, 
        \end{itemize}
        and \mov underfunds by at most $\alpha$. 
\end{proposition}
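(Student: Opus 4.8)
The plan is to reuse a single family of instances -- essentially the one of \citet{caragiannis2022truthful} -- on which the output of \emph{every} moving phantom mechanism is pinned down to the uniform division, so that the underfunding of one project equals its (large) mean minus $1/m$. Set $k=\lfloor n/2\rfloor$ and consider the profile $P$ in which $k$ voters report $(1,0,\dots,0)$ and the remaining $n-k=\lceil n/2\rceil$ voters report the uniform division $(\tfrac1m,\dots,\tfrac1m)$. First I would record that $\overline{P}_1=\tfrac1n\big(k+\tfrac{n-k}{m}\big)$, and note that projects $2,\dots,m$ feed the mechanism the identical multiset of values, so by definition $\mov(P)_2=\dots=\mov(P)_m=:y$, while $\mov(P)_1=:x$, with $x+(m-1)y=1$ at normalization.

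The core of the proof is the claim that the medians satisfy $x=y=1/m$ for every phantom family \F, which I would obtain from two opposing inequalities evaluated at a cleverly chosen time. On the one hand, the project-$1$ report multiset $\{1^{k},(1/m)^{n-k}\}$ dominates, order statistic by order statistic, the multiset $\{0^{k},(1/m)^{n-k}\}$ of any project $j\ge 2$; since the same phantom values $f_0(t),\dots,f_n(t)$ are inserted on every project and the median is monotone under such domination, the project-$1$ median is at least the project-$j$ median at \emph{every} $t$. On the other hand, writing the phantoms in non-increasing order $f_0\ge\dots\ge f_n$ and using $k\le n-k$, I would set $\tau=\sup\{t: f_{n-k}(t)\le 1/m\}$; by continuity together with $f_{n-k}(0)=0$ and $f_{n-k}(1)=1$, this gives $f_{n-k}(\tau)=1/m$ and hence $f_k(\tau)\ge f_{n-k}(\tau)=1/m$. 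At $t=\tau$, at least $k+1$ phantoms are $\le 1/m$, so on project $1$ at least $(n-k)+(k+1)=n+1$ of the $2n+1$ values are $\le 1/m$, forcing its median to be $\le 1/m$; symmetrically at least $k+1$ phantoms are $\ge 1/m$, so on each project $j\ge 2$ at least $n+1$ values are $\ge 1/m$ and the median is $\ge 1/m$. Combining these with the domination inequality yields $1/m\ge x\ge y\ge 1/m$, so $x=y=1/m$; thus the medians at $\tau$ sum to $1$, making $\tau$ a valid normalization time, and by uniqueness of the allocation $\mov(P)=(\tfrac1m,\dots,\tfrac1m)$.

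It remains to compute the underfunding $\overline{P}_1-\mov(P)_1=\overline{P}_1-\tfrac1m$. Substituting $k=n/2$ for even $n$ gives $\tfrac12-\tfrac1{2m}=\tfrac12\big(1-\tfrac1m\big)$, while $k=(n-1)/2$ for odd $n$ gives $\tfrac{(m-1)(n-1)}{2mn}=\tfrac12\big(1-\tfrac1m\big)\big(1-\tfrac1n\big)$, matching the two claimed thresholds exactly. Since project $1$ is underfunded by precisely this amount, any $\alpha$ with $\alpha(n,m)$ strictly below it fails to be an underfunding guarantee, which is the desired statement.

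The step I expect to be the main obstacle is the upper bound $x\le 1/m$, i.e. that the output is the uniform division for \emph{every} phantom family. Unlike the domination inequality $x\ge y$, which follows from comparing the report columns, this direction cannot be read off the inputs: it requires exhibiting a concrete normalization time $\tau$ at which the shared phantoms straddle $1/m$ correctly (here $f_{n-k}(\tau)\le 1/m\le f_k(\tau)$), and then carrying out the order-statistic counting in a way that remains valid despite possible ties at the value $1/m$.
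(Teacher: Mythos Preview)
Your proposal is correct and uses the same instance and the same conclusion as the paper; the only difference is that the paper outsources to \citet{caragiannis2022truthful} the fact that every moving phantom mechanism outputs the uniform division on this profile, whereas you supply a self-contained order-statistic argument at the time $\tau$ with $f_{n-k}(\tau)=1/m$. Your counting argument (with the domination inequality $x\ge y$ sandwiching both medians at $1/m$) is exactly the kind of verification the paper gestures at when it says ``it is easy to verify by going through their arguments that the same holds when $n$ is odd.''
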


\begin{proof}
For any $n,m \in \mathbb{N}$, let $N_1 = [\lfloor n/2 \rfloor]$ and $N_2 = N \setminus N_1$. Then, define the profile $$P_{ij}= \begin{cases} 1 & \text{ if } i \in N_1, j = 1 \\ 0 & \text{ if } i \in N_1, j\neq 1 \\  1/m & \text{ if } i \in N_2.\end{cases}$$ \citet[Theorem 7]{caragiannis2022truthful} prove 
that for this profile under the restriction that $n$ is even, any moving phantom mechanism returns $\mov(P)_j = 1/m$ for all $j \in [m]$. It is easy to verify by going through their arguments that the same holds when $n$ is odd. Hence, we receive the following lower bounds for the underfunding guarantees: For $n$ even, it holds that $\overline{P}_1 = \frac{1}{2}+\frac{1}{2m}$, which implies $\overline{P}_1 - \mov(P)_1 = \frac{1}{2} - \frac{1}{2m}.$ For odd $n$, we get that $\overline{P}_1 = \frac{1}{2}\big(1+\frac{1}{m}\big(1 + \frac{1}{n}\big)-\frac{1}{n}\big)$ and therefore $\overline{P}_1 - \mov(P)_1 = \frac{1}{2}\big(1-\frac{1}{m}\big(1-\frac{1}{n}\big) - \frac{1}{n}\big)$.
\end{proof}

In \Cref{lem:upwards-deviation-egal} of the next section, we show that any proportional mechanism overfunds by at most $\frac{1}{4}$. Hence, we can focus on finding a proportional mechanism with optimal underfunding guarantee. While doing so, we first seek to understand the space of mechanisms that are optimal for large $m$, i.e., mechanisms with an underfunding guarantee $\alpha$ satisfying $\lim_{m \rightarrow \infty} \alpha(n,m) = \frac{1}{2}$.
In \Cref{lem:lowerbound-independent-type}, we exhibit a class of moving phantom mechanisms that do not provide an optimal asymptotic underfunding guarantee. As we show in \Cref{cor:lowerbound-independent}, this class includes the \emph{Independent Markets mechanism}, which has been previously studied by \citet{freeman2021truthful} and \citet{caragiannis2022truthful}. Intuitively, \Cref{lem:lowerbound-independent-type} implies that any mechanism that moves a phantom with high index while the symmetric phantom of low index is still low has to have a higher asymptotic underfunding guarantee than $\frac{1}{2}$.

\begin{proposition} \label{lem:lowerbound-independent-type}
   Let $\mov$ be a moving phantom mechanism and $k \in [\lfloor n/2\rfloor]_0$. Then, for any $t \in [0,1]$ such that $f_{n-k}(t) >0$, there exists no $\alpha$ such that
    $$\lim_{m \rightarrow \infty} \alpha(n,m) \leq \frac{n-k}{n} - f_k(t)$$
    and  \mov 
    underfunds by at most $\alpha$.
\end{proposition}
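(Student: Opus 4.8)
The plan is to refute the existence of such an $\alpha$ by exhibiting, for every sufficiently large $m$, a profile on which project~$1$ is underfunded relative to $\overline{P}_1$ by at least $\frac{n-k}{n}-f_k(t)$. Concretely, I would let the $n-k$ voters in $[n-k]$ be \emph{supporters} who report $P_{i1}=1$ and $P_{ij}=0$ for $j\neq 1$, and let the remaining $k$ voters \emph{spread} their mass uniformly, reporting $P_{i1}=0$ and $P_{ij}=\frac{1}{m-1}$ for all $j\in\{2,\dots,m\}$. Then $\overline{P}_1=\frac{n-k}{n}$, and the whole argument reduces to bounding $\mov(P)_1$ from above.

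Writing $t^*$ for the normalization time, the next step is the median bookkeeping. On project~$1$ the $2n+1$ relevant values are the phantoms $f_0(t^*)\ge\cdots\ge f_n(t^*)$ together with $n-k$ reports equal to $1$ and $k$ reports equal to $0$; the unit reports lie weakly above all phantoms and the zero reports weakly below, so the $(n+1)$-st value counted from the top is exactly $f_k(t^*)$, giving $\mov(P)_1=f_k(t^*)$. On each filler project the values are the phantoms together with $k$ reports equal to $v:=\frac{1}{m-1}$ and $n-k$ reports equal to $0$; an analogous count shows the median equals $f_{n-k}(t^*)$ whenever $v>f_{n-k}(t^*)$ (and in the boundary case by a tie). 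Hence normalization forces $f_k(t^*)+(m-1)\,f_{n-k}(t^*)=1$, and one checks that the relevant total is nondecreasing in $t$ and that the regime $v>f_{n-k}(t^*)$ holds at this $t^*$, so it is indeed the time the mechanism selects.

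The crux, and where the hypothesis $f_{n-k}(t)>0$ enters, is the estimate $t^*\le t$. Fix $m$ large enough that $v=\frac{1}{m-1}<f_{n-k}(t)$. Then at time $t$ at least $n-k+1$ phantoms exceed $v$, so every filler's median is at least $v$, and the total funding at time $t$ is at least $f_k(t)+(m-1)v=f_k(t)+1\ge 1$. Since the total is nondecreasing and equals $1$ at $t^*$, this yields $t^*\le t$, and monotonicity of $f_k$ gives $f_k(t^*)\le f_k(t)$. Therefore the underfunding of project~$1$ is $\overline{P}_1-\mov(P)_1=\frac{n-k}{n}-f_k(t^*)\ge\frac{n-k}{n}-f_k(t)$. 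As this holds for all large $m$, any $\alpha$ that bounds the underfunding satisfies $\alpha(n,m)\ge\frac{n-k}{n}-f_k(t)$ eventually, so $\lim_{m\to\infty}\alpha(n,m)\ge\frac{n-k}{n}-f_k(t)$, which is the asserted obstruction: the positivity of $f_{n-k}(t)$ is exactly what lets the many uniformly spread fillers absorb the entire budget by time $t$.

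I expect the main obstacle to be the median bookkeeping, specifically verifying that each filler's funding is governed by the phantom $f_{n-k}$ rather than by the report $v$ or by the lowest phantom $f_n$, and confirming that the regime $v>f_{n-k}(t^*)$ is self-consistent at the normalization time. The conceptual heart, however, is the normalization-time inequality $t^*\le t$: increasing $m$ must drive the normalization time down to (and below) $t$ rather than stalling, and this is precisely the role of $f_{n-k}(t)>0$. I would also take care of the edge cases in which some low-index phantom reaches $1$ before $t^*$ (these do not change the count placing $f_k(t^*)$ at the median on project~$1$), and be precise about whether the limiting underfunding $\tfrac{n-k}{n}-f_k(t)$ is attained or only approached, since that governs whether the bound is refuted strictly or weakly.
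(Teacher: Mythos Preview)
Your proposal is correct and follows essentially the same route as the paper: you build the identical profile (with $n-k$ voters single-minded on project~$1$ and $k$ voters spreading uniformly over the remaining $m-1$ projects), observe that $\mov(P)_1=f_k(t^*)$, and use the positivity of $f_{n-k}(t)$ to force $t^*\le t$ for all sufficiently large $m$, yielding the desired underfunding bound $\frac{n-k}{n}-f_k(t)$. The only difference is cosmetic: the paper triggers the normalization-before-$t$ argument via the threshold $f_k(t)+(m-1)f_{n-k}(t)>1$, whereas you use $\frac{1}{m-1}<f_{n-k}(t)$; both are satisfied for large $m$ and lead to the same conclusion (and your detailed median bookkeeping on the fillers, while correct, is not actually needed for the final bound, since only $\mov(P)_1=f_k(t^*)$ matters).
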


\begin{proof}
    Let \mov, $k$, and $t$ be as in the proposition assumptions. 
    Now, for any $m \in \mathbb{N}$ such that $f_{k}(t) + (m-1) \cdot f_{n-k}(t) > 1$ we can construct a simple instance in which $n-k$ voters cast the vote $(1,0,\dots,0)$ and the remaining $k$ voters cast the vote $(0,\frac{1}{m-1},\dots,\frac{1}{m-1})$. By construction, the mechanism \mov is normalized for some $t' < t$. As a result, we get that $$\overline{P}_1 - \mov(P)_1 \geq \frac{n-k}{n} - f_k(t),$$ 
    and the proposition statement follows.
\end{proof}

Below, we show the implication for the \emph{Independent Markets} mechanism, which is defined by the phantom system\footnote{This does not exactly fit the definition of a moving phantom mechanism since $f_k(1)<1$ for $k \in [n]$. However, it is known that $f_k(1) \ge 1-\frac{k}{n}$ is sufficient to always achieve normalization~\citep{freeman2021truthful}, thus moving all phantoms to 1 is redundant.} $$\mathcal{F}_n = \Big\{f_k(t) = t\cdot \frac{n-k}{n} \text{ for all } k \in [n]_{0}, t \in [0,1]\Big\},$$ for all $n \in \mathbb{N}$.

\begin{corollary} \label{cor:lowerbound-independent}
    For the Independent Markets mechanism and any $\epsilon >0$, there exists no function $\alpha$ satisfying $$\lim_{m \rightarrow \infty}\alpha(n,m) \leq (1-\epsilon)\frac{n-1}{n}$$ 
    such that the Independent Markets mechanism underfunds by at most $\alpha$. 
\end{corollary}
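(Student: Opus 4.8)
The plan is to derive Corollary~\ref{cor:lowerbound-independent} as a direct specialization of Proposition~\ref{lem:lowerbound-independent-type}. The key observation is that the Independent Markets phantom system has the explicit linear form $f_k(t) = t\cdot \frac{n-k}{n}$, so every quantity appearing in the proposition can be computed in closed form rather than merely bounded. I would set $k=1$, since the corollary's target expression $(1-\epsilon)\frac{n-1}{n}$ has the flavor of $\frac{n-k}{n}$ with $k=1$, and then show that by choosing $t$ close enough to $1$ the bound $\frac{n-k}{n} - f_k(t)$ from the proposition can be pushed arbitrarily close to $\frac{n-1}{n}$.

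Concretely, first I would verify that $k=1$ satisfies the hypothesis $k \in [\lfloor n/2\rfloor]_0$ (true for all $n \ge 2$) and that the relevant phantom is positive: with $f_{n-k}(t) = f_{n-1}(t) = t \cdot \frac{1}{n}$, we have $f_{n-1}(t) > 0$ for any $t > 0$, so the hypothesis $f_{n-k}(t) > 0$ holds. Next I would compute the right-hand side of the proposition's bound explicitly: $\frac{n-k}{n} - f_k(t) = \frac{n-1}{n} - t\cdot\frac{n-1}{n} = (1-t)\frac{n-1}{n}$. Proposition~\ref{lem:lowerbound-independent-type} then tells us that no $\alpha$ with $\lim_{m\to\infty}\alpha(n,m) \le (1-t)\frac{n-1}{n}$ can bound the underfunding. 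Finally, given any $\epsilon > 0$, I would choose $t = \epsilon$ (or more safely any $t$ with $0 < t \le \epsilon$ and $t>0$), which yields the bound $(1-\epsilon)\frac{n-1}{n}$ exactly, completing the argument.

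The only subtlety I anticipate is a boundary/quantifier issue rather than a genuine mathematical obstacle. Proposition~\ref{lem:lowerbound-independent-type} is phrased as holding \emph{for any} fixed $t$ with $f_{n-k}(t)>0$, and its proof constructs instances only for those $m$ large enough that $f_k(t) + (m-1)f_{n-k}(t) > 1$; since $f_{n-1}(t) = t/n > 0$ is a fixed positive constant once $t$ is fixed, such $m$ always exist and indeed the inequality holds for all sufficiently large $m$, so the limit-based conclusion is legitimate. I would just take care that $\epsilon$ is small enough that $t=\epsilon$ lands in $(0,1]$; for $\epsilon \ge 1$ the claimed bound is non-positive and the statement is vacuous, so there is nothing to prove. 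With these checks in place the corollary follows by simply instantiating the proposition, and the main work is purely the substitution of the closed-form phantoms.
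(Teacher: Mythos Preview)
Your proposal is correct and follows essentially the same approach as the paper: set $k=1$, take $t=\epsilon$, verify $f_{n-1}(\epsilon)=\epsilon/n>0$, and compute $\frac{n-1}{n}-f_1(\epsilon)=(1-\epsilon)\frac{n-1}{n}$ to invoke Proposition~\ref{lem:lowerbound-independent-type}. Your additional remarks on the boundary cases ($n\ge 2$ for $k=1\in[\lfloor n/2\rfloor]_0$, and $\epsilon\ge 1$ being vacuous) are more careful than the paper's presentation but do not change the argument.
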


\begin{proof}
    For any $\epsilon>0$ it holds that $f_1(\epsilon) = \epsilon \frac{n-1}{n}$ and $f_{n-1}(\epsilon) = \epsilon \frac{1}{n}>0$. Therefore,  \Cref{lem:lowerbound-independent-type} implies that there exists no $\alpha$ with $\lim_{m \rightarrow \infty} \alpha(n,m) \leq \frac{n-1}{n} - \epsilon \frac{n-1}{n}$, such that the Independent Markets mechanism underfunds by at most $\alpha$. 
\end{proof}

\section{The Ladder Mechanism}

\Cref{lem:lowerbound-independent-type} narrows down the space of moving phantom mechanisms that can achieve a project fairness guarantee of $\frac{1}{2}$ in the limit as $m$ grows: At any moment in time $t \in [0,1]$ when $f_{n-k}(t)>0$ for any $k \in [\lfloor n/2\rfloor]_0$, it needs to hold that $f_{k}(t)\geq \frac{1}{2}-\frac{k}{n}$. Thus, we aim to construct mechanisms that first move the upper phantoms while keeping the lower phantoms at zero. However, while doing so we have to be careful. For example, it might be tempting to consider the moving phantom mechanism which starts by increasing $f_0$ from $0$ to $1$ (while keeping all other phantoms at $0$), then moves $f_1$ from $0$ to $\frac{n-1}{n}$, and so on. However, the large gap between the  middle phantoms leads to problems itself: For any odd $n$, there exists a profile\footnote{Let $n\in \mathbb{N}$ be odd, $m=3$. Then, $\lfloor n/2 \rfloor$ of the voters report $(1,0,0)$ while $\lceil n/2 \rceil$ of the voters report $(0,\frac{1}{2},\frac{1}{2})$. The mechanism described above would output $(0,\frac{1}{2},\frac{1}{2})$ and therefore underfund project $1$ by $\frac{1}{2}-\frac{1}{2n}$.
The same profile provides a counter example for the mechanism maximizing utilitarian social welfare \cite{freeman2021truthful}.} with $m=3$ in which this algorithm underfunds a project by $\frac{1}{2}-\frac{1}{2n}$, which 
is larger than the lower bound from \Cref{lem:underfund-lowerbound}.

There exists one moving phantom mechanism in the literature that avoids both of the described issues: \citet{caragiannis2022truthful} introduced the \emph{Piecewise Uniform mechanism}, which 
in a first phase spreads the upper $\lceil \frac{n+1}{2} \rceil$ phantoms uniformly within the interval $[0,1]$, and, in a second phase, spreads the lower $\lfloor \frac{n+1}{2} \rfloor$ phantoms uniformly within the interval $[0,\frac{1}{2}]$ while pushing the first half of the phantoms into the interval $[\frac{1}{2},1]$. The Piecewise Uniform mechanism avoids the issue captured by \Cref{lem:lowerbound-independent-type} and at 
no time creates too large a gap between consecutive phantoms.
Hence, the mechanism is in fact a promising candidate for optimal project fairness. That said, the precise definition of the mechanism is intricate, making it difficult to analyze. Instead, we propose a novel and arguably simpler mechanism which also avoids both issues and additionally allows for an elegant proof of optimal project fairness. 

We refer to our mechanism as the \emph{Ladder mechanism}, and there are two ways to gain intuition for it: The first view, giving the mechanism its name, thinks of a rope ladder where the ladder rungs correspond to the phantoms. The ladder is then pulled up by its top rung. The second view, being closer to its formal definition, imagines the phantoms being uniformly spread within the interval $[-1,0]$, and then, as $t$ increases, being pushed upwards (with equal speed) until they are uniformly spread in $[0,1]$. However, since phantoms need to be non-negative, they only become ``active'' once they cross $0$, which is ensured by the max function in the following definition.

\begin{definition}[Ladder Mechanism]
    The Ladder mechanism is the moving phantom mechanism defined by the following phantom system for any $n \in \mathbb{N}$: 
    $$f_k(t) = \max\Big(t - \frac{k}{n},0\Big) \text{ for all } k \in [n]_0, t \in [0,1].$$
\end{definition}

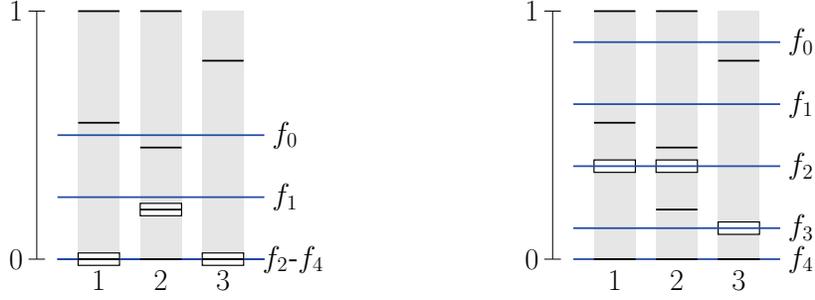
\begin{figure}
    \centering \hspace*{2cm}
    \begin{subfigure}{0.45\textwidth}
    \scalebox{0.55}{\centering
    \begin{tikzpicture}
        \draw (0,0) -- (0,6);  
        \draw (-.2,0) -- (0.2,0); 
        \draw (-.2,6) -- (0.2,6); 
        \node at (-0.5,0) {\huge$0$}; 
        \node at (-0.5,6) {\huge$1$}; 
        \filldraw[fill=black!10,draw=none] (1,0) rectangle (2,6); 
        \filldraw[fill=black!10,draw=none] (2.5,0) rectangle (3.5,6); 
        \filldraw[fill=black!10,draw=none] (4,0) rectangle (5,6); 
        \node at (1.5,-0.5) {\huge $1$};
        \node at (3,-0.5) {\huge$2$};
        \node at (4.5,-0.5) {\huge$3$};

        \foreach \d/\e/\f in {0/0.2/0}{
        \draw[fill=white,thick] (1,\d*6-0.15) rectangle (2,\d*6+0.15); 
        \draw[fill=white,thick] (2.5,\e*6-0.15) rectangle (3.5,\e*6+0.15); 
        \draw[fill=white,thick] (4,\f*6-0.15) rectangle (5,\f*6+0.15); 
        }

        \draw[very thick,maincolor] (0.5,3) -- (5.5,3); 
        \node at (6,3) {\huge$f_0$}; 
        \draw[very thick,maincolor] (0.5,1.5) -- (5.5,1.5); 
        \node at (6,1.5) {\huge$f_1$}; 
        \draw[very thick,maincolor] (0.5,0) -- (5.5,0);         
        \draw[very thick,maincolor] (0.5,0) -- (5.5,0); 
        \draw[very thick,maincolor] (0.5,0) -- (5.5,0); 
        \node at (6.2,0) {\huge$f_2$-$f_4$}; 

        \foreach \a/\b/\c in {0/0.2/0.8, 1/0/0,0/1/0,0.55/0.45/0}{
        \draw[very thick] (1,\a*6) -- (2,\a*6); 
        \draw[very thick] (2.5,\b*6) -- (3.5,\b*6); 
        \draw[very thick] (4,\c*6) -- (5,\c*6); 
        }
        
    \end{tikzpicture}}
    \end{subfigure}\begin{subfigure}{0.45\textwidth}
        \scalebox{0.55}{\centering
            \begin{tikzpicture}
        \draw (0,0) -- (0,6);  
        \draw (-.2,0) -- (0.2,0); 
        \draw (-.2,6) -- (0.2,6); 
        \node at (-0.5,0) {\huge$0$}; 
        \node at (-0.5,6) {\huge$1$}; 
        \filldraw[fill=black!10,draw=none] (1,0) rectangle (2,6); 
        \filldraw[fill=black!10,draw=none] (2.5,0) rectangle (3.5,6); 
        \filldraw[fill=black!10,draw=none] (4,0) rectangle (5,6); 
        \node at (1.5,-0.5) {\huge$1$};
        \node at (3,-0.5) {\huge$2$};
        \node at (4.5,-0.5) {\huge$3$};

        \foreach \d/\e/\f in {0.375/0.375/0.125}{
        \draw[fill=white,thick] (1,\d*6-0.15) rectangle (2,\d*6+0.15); 
        \draw[fill=white,thick] (2.5,\e*6-0.15) rectangle (3.5,\e*6+0.15); 
        \draw[fill=white,thick] (4,\f*6-0.15) rectangle (5,\f*6+0.15); 
        }
        \draw[very thick,maincolor] (0.5,5.25) -- (5.5,5.25); 
        \node at (6,5.25) {\huge$f_0$}; 
        \draw[very thick,maincolor] (0.5,3.75) -- (5.5,3.75); 
        \node at (6,3.75) {\huge$f_1$}; 
        \draw[very thick,maincolor] (0.5,0.75) -- (5.5,0.75); 
        \node at (6,0.75) {\huge$f_3$}; 
        \draw[very thick,maincolor] (0.5,0) -- (5.5,0); 
        \node at (6,0) {\huge$f_4$};
        \draw[very thick,maincolor] (0.5,2.25) -- (5.5,2.25); 
        \node at (6,2.25) {\huge$f_2$}; 

        \foreach \a/\b/\c in {0/0.2/0.8, 1/0/0,0/1/0,0.55/0.45/0}{
        \draw[very thick] (1,\a*6) -- (2,\a*6); 
        \draw[very thick] (2.5,\b*6) -- (3.5,\b*6); 
        \draw[very thick] (4,\c*6) -- (5,\c*6); 
        }
    
    \end{tikzpicture}}
    \end{subfigure}
    \caption{Example execution of the Ladder mechanism with $n=4$ voters and $m=3$ projects. The left panel shows the positions of the phantoms at $t=\frac{1}{2}$ (before normalization is reached) while the right panel shows them at $t=\frac{11}{12}$ (exactly when normalization is reached).}
    \label{fig:mechanism}
\end{figure}

We illustrate the Ladder mechanism in \Cref{fig:mechanism}. The example displayed in the figure has four voters with reports $(0,0.2,0.8), (1,0,0), (0,1,0), (0.55,0.45,0)$. Normalization is reached at $t=\frac{11}{12}$, returning the budget distribution $(\frac{5}{12}, \frac{5}{12}, \frac{1}{6})$.

\section{Upper Bounds} \label{sec:upper}

In this section, we present our main results, i.e., upper bounds for the overfunding and underfunding guarantees provided by the Ladder mechanism that are \emph{essentially} tight. In this context, we write \emph{essentially} in order to refer to the fact that there is a small gap of $\mathcal{O}(\frac{1}{n})$ between the upper and lower bounds only in the case when $n$ is odd. 

To prove the overfunding guarantee (\Cref{lem:upwards-deviation-egal}), we need the fact that any moving phantom mechanism is monotone. That is, if a single voter increases its report for a single project while it decreases its report for all other projects, then this project receives at least as much funding as in the original instance. 
Formally, a budget aggregation mechanism is \emph{monotone} if, for any two profiles $P$ and $P'$ for which there exists a voter $i_0$ and a project $j_0$ such that 
$(P_{ij})_{j \in [m]}=(P'_{ij})_{j \in [m]}$ for all $i \in [n] \setminus i_0$, and $P_{i_0,j_0} < P'_{i_0,j_0}$ while $P_{i_0,j} \geq P'_{i_0,j}$ for all $j \in [m] \setminus j_0$, it holds that $\mov(P)_{j_{0}} \leq \mov(P')_{j_{0}}$. 

\begin{lemma}
    Any moving phantom mechanism is monotone. 
\end{lemma}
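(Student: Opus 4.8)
The plan is to reduce everything to the behavior of the per-project median as the phantoms rise. For a profile $Q$ and project $j$, define
$g_j^Q(t) = \med(f_0(t),\dots,f_n(t),Q_{1j},\dots,Q_{nj})$,
so that the mechanism's output on $j$ equals $g_j^Q(t^*)$ for any normalization time $t^*$ of $Q$. I would first record two elementary monotonicity facts about the median. First, for fixed votes, $g_j^Q$ is non-decreasing in $t$: as $t$ grows each $f_k(t)$ weakly increases, and increasing some of the arguments of a median can only weakly increase it. Second, since $P$ and $P'$ agree except at voter $i_0$, with $P_{i_0 j_0} < P'_{i_0 j_0}$ and $P_{i_0 j} \ge P'_{i_0 j}$ for $j \ne j_0$, the same monotonicity of the median in a single argument gives $g_{j_0}^{P}(t) \le g_{j_0}^{P'}(t)$ and $g_j^{P}(t) \ge g_j^{P'}(t)$ for all $t$ and all $j \ne j_0$.

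With these in hand, let $t^*$ and $t'$ be normalization times for $P$ and $P'$, so $\sum_j g_j^{P}(t^*) = \sum_j g_j^{P'}(t') = 1$; recall the resulting allocation is independent of the choice of normalization time, so this is harmless. I would split on the order of $t^*$ and $t'$. If $t' \ge t^*$, then combining the two facts, $\mov(P)_{j_0} = g_{j_0}^{P}(t^*) \le g_{j_0}^{P'}(t^*) \le g_{j_0}^{P'}(t') = \mov(P')_{j_0}$, and we are done immediately.

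The interesting case is $t' < t^*$, where the direct chain fails and I would argue by contradiction. Suppose $\mov(P)_{j_0} > \mov(P')_{j_0}$, i.e. $g_{j_0}^{P}(t^*) > g_{j_0}^{P'}(t')$. Using the two normalization identities, this yields $\sum_{j \ne j_0} g_j^{P}(t^*) < \sum_{j \ne j_0} g_j^{P'}(t')$. On the other hand, for each $j \ne j_0$, monotonicity in $t$ (using $t^* > t'$) together with the second fact gives $g_j^{P}(t^*) \ge g_j^{P}(t') \ge g_j^{P'}(t')$, and summing over $j \ne j_0$ yields the reverse inequality $\sum_{j \ne j_0} g_j^{P}(t^*) \ge \sum_{j \ne j_0} g_j^{P'}(t')$, a contradiction. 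Hence $\mov(P)_{j_0} \le \mov(P')_{j_0}$.

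The main obstacle is precisely this second case: the single-coordinate comparison pushes the median on $j_0$ up but the median on every other project down, so at a common time $t$ the total funding need not be comparable and one cannot directly compare the outputs. The resolution is to play the two monotonicity facts against each other on the complementary set $\{j \ne j_0\}$, where the sum of medians is pinned down to $1 - \mov(\cdot)_{j_0}$ by the normalization constraint; this is where the normalization identity does the real work.
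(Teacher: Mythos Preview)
Your argument is correct and follows essentially the same route as the paper: split on the order of the normalization times, handle the ``easy'' order directly on project $j_0$, and in the other order compare coordinatewise on $\{j \ne j_0\}$ and invoke normalization. The contradiction framing in your second case is unnecessary---your chain $g_j^{P}(t^*) \ge g_j^{P'}(t')$ for $j \ne j_0$ already sums to $1-\mov(P)_{j_0} \ge 1-\mov(P')_{j_0}$ directly---but this is cosmetic.
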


\begin{proof}
    Let $t \in [0,1]$ ($t' \in [0,1]$, respectively) be the time at which mechanism \mov reaches normalization on profile $P$ ($P'$, respectively). If $t\leq t'$, then $\mov(P)_{j_0} \leq \mov(P')_{j_0}$ since phantoms and voters on project $j_0$ are all weakly higher for $P'$ than for $P$ at the time of normalization. If $t>t'$,
    then voters and phantoms are weakly lower for $P'$ than for $P$ for all $j \in [m] \setminus \{j_0\}$ at the time of normalization, implying $\mov(P)_j \geq \mov(P')_j$.
    By normalization, this implies $\mov(P)_{j_0} \leq \mov(P')_{j_0}$. 
\end{proof}

We are now ready to prove \Cref{lem:upwards-deviation-egal}. 

\begin{theorem}\label{lem:upwards-deviation-egal}
Let \mov be a proportional moving phantom mechanism. Then, \mov overfunds by at most $\alpha$, where $$\alpha(n,m) = \begin{cases}\frac{1}{4} & \text{ for } n,m \in \mathbb{N}, n \text{ even} \\ \frac{1}{4}\big(1 - \frac{1}{n^2}\big) & \text{ for } n,m \in \mathbb{N}, n \text{ odd.}\end{cases}$$ 
\end{theorem}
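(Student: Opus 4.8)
The plan is to fix a profile $P$, a project $j_0$, and write $y := \mov(P)_{j_0}$, $v_i := P_{ij_0}$ for the reports on $j_0$, and $t^*$ for the normalization time, so that $y = \med(f_0(t^*),\dots,f_n(t^*),v_1,\dots,v_n)$. The first step is a clean reduction using the monotonicity lemma together with proportionality. Let $a = |\{i : v_i > 0\}|$. I build a single-minded profile $P'$ from $P$ by moving every voter with $v_i > 0$ to the single-minded report $e_{j_0}$ (all weight on $j_0$) and every voter with $v_i = 0$ to a single-minded report on a fixed project $j_1 \neq j_0$. Every voter's report on $j_0$ weakly increases, so applying monotonicity one voter at a time gives $\mov(P')_{j_0} \ge \mov(P)_{j_0}$, while proportionality gives $\mov(P')_{j_0} = \overline{P'}_{j_0} = a/n$. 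Hence $y \le a/n$, and therefore
$$\mov(P)_{j_0} - \overline{P}_{j_0} \le \frac{a}{n} - \overline{P}_{j_0} = \frac{1}{n}\sum_{i : v_i > 0}(1 - v_i).$$
This estimate is already tight in the extremal instance of \Cref{lem:overfund-upperbound} (there $a = n/2$ and each positive voter sits exactly at $y = \tfrac12$), which confirms that $\tfrac14$ is the right target; but it is much too weak on its own, since e.g. when all voters report a common small positive value on $j_0$ the right-hand side is close to $1$ while the true overfunding is $0$.

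The second step extracts structure from the median at $t^*$. Among the $2n+1$ values whose median is $y$, at most $n$ lie strictly below $y$; since the $n-a$ voters with $v_i = 0$ already lie below $y$, at most $a$ of the $n+1$ phantoms can lie below $y$, i.e.\ $f_{n-a}(t^*) \ge y$. On the other hand, proportionality pins the phantom heights (the functions $f_k$ depend only on $n$, not on $m$): evaluating the mechanism on the auxiliary two-project single-minded profile that splits the voters $(a, n-a)$ shows that at its normalization time $s_a$ one has $f_{n-a}(s_a) = a/n$ and $f_a(s_a) = (n-a)/n$. The plan is to combine $f_{n-a}(t^*) \ge y$ with these forced heights and the monotonicity of each $f_k$ in $t$ to argue that, in the worst case for overfunding, essentially all $a$ positive voters must themselves lie at or above $y$ (otherwise too few high values remain to keep the median at $y$). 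Feeding this into $\overline{P}_{j_0} \ge \tfrac1n |\{i : v_i \ge y\}|\cdot y$ lower-bounds the mean and is meant to upgrade the first step to $\mov(P)_{j_0} - \overline{P}_{j_0} \le y(1-y) \le \tfrac14$.

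The hard part will be exactly the claim that the positive voters cannot be simultaneously numerous and small. On a single coordinate this is false (many high phantoms could support the median by themselves), and it is only prevented by the global normalization constraint $\sum_j \mov(P)_j = 1$: having many phantoms above $y$ at $t^*$ forces $t^*$ large, which raises the medians on the other projects and drives the total funding above $1$. Thus normalization caps how large $t^*$—and hence how many phantoms exceed $y$—can be. Making this quantitative and uniform over all proportional moving phantom mechanisms is the crux, and I expect most of the work to live here; the natural route is to compare $P$ with its rounded profile $P'$ at the common time $t^*$ and read off an inequality on $t^*$ from the fact that $P$ is normalized there.

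Finally, the even/odd split is a discreteness effect rather than a new idea. For even $n$ the counts $n/2$ and the level $y = \tfrac12$ align exactly and give the clean $\tfrac14$. For odd $n$, the floors and ceilings in ``at most $a$ phantoms below $y$'' and in the balancing counts $\lfloor n/2\rfloor$ versus $\lceil n/2\rceil$ shift the optimal configuration slightly off $y=\tfrac12$, which is what degrades the bound to $\tfrac14\bigl(1-\tfrac1{n^2}\bigr)$; I would obtain this constant by re-running the final optimization with integer vote counts instead of treating $ny$ as an integer.
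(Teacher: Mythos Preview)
Your proposal has a genuine gap, and its source is the choice of counting parameter. You work with $a = |\{i : v_i > 0\}|$; this yields the (correct) bound $y \le a/n$, but then leaves you needing to show that the positive voters are essentially all at or above $y$ in order to lower-bound $\overline{P}_{j_0}$. That claim is false in general: take $n=3$, $m=2$, reports $(0.1,0.1,0.9)$ on project~$1$ under Independent Markets, where $y = 1/3$ but two of the three positive voters sit at $0.1 < y$. Your rescue via global normalization would have to rule out such configurations quantitatively and uniformly over all proportional phantom systems, and nothing in your sketch does this. (There is also a smaller issue already in step one: sending the $v_i=0$ voters to be single-minded on $j_1$ is a monotonicity move for $j_1$, not for $j_0$, so the lemma as stated does not guarantee that $\mov_{j_0}$ is preserved under that change.)

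The paper sidesteps the whole difficulty by counting $n^- = |\{i : P_{ij} < y\}|$ instead. With this count both halves are immediate. For the mean: $n - n^-$ voters report at least $y$, so $\overline{P}_j \ge (1 - n^-/n)\,y$ with no further work. For the output: first move each voter $i \in N^-$ to be single-minded on some project $j_i$ on which that voter currently over-reports ($P_{ij_i} > \mov(P)_{j_i}$; such $j_i$ exists since votes are normalized). Holding the phantoms fixed at $t^*$, the median on $j$ is unchanged (each moved value on $j$ was already below $y$ and stays at $0$) while every other median weakly drops (on $j_i$ the moved value was already above its median and goes to $1$; on the remaining coordinates it goes down to $0$); hence the new normalization time is at least $t^*$ and $\mov(P')_j \ge y$. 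Only now does monotonicity enter, pushing the remaining $n - n^-$ voters to be single-minded on $j$ to reach a fully single-minded profile with $\mov(P'')_j = 1 - n^-/n$ by proportionality. Combining gives $y - \overline{P}_j \le (n^-/n)(1 - n^-/n)$, and the even/odd split is simply the integer maximum of this quadratic over $n^- \in [n]_0$---no separate re-run is needed.
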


\begin{proof}
Consider some profile $P$ with normalization achieved at time $t^*$, and some project $j \in [m]$. Denote by $N^-$ the set of voters with $P_{ij}<\mov(P)_j$ and let $n^-=|N^-|$. Note that for every voter $i \in N^-$ there must exist a project $j_i$ with $P_{ij_i}>\mov(P)_{j_i}$, since votes and outputs are normalized. Starting from $P$, construct a profile $P'$ by, for every voter $i \in N^-$, changing $i$'s vote to be single minded on project $j_i$. Note that, holding the position of the phantoms fixed at $\{ f_k(t^*): k \in [n]_0 \}$, the median on every coordinate is (weakly) lower in $P'$ than in $P$, with the median on project $j$ being the same in the two profiles. So it might be the case that to achieve normalization in profile $P'$, we need to advance the phantoms to $\{ f_k(t'): k \in [n]_0 \}$ for some $t'>t^*$. Therefore, $\mov(P)_j \le \mov(P')_j$. Let us now construct a profile $P''$ by starting with $P'$ and, for every voter $i \not \in N^-$, setting their vote to be single-minded on project $j$. By monotonicity, $\mov(P'')_j \ge \mov(P')_j$. By proportionality of \mov, we have $\mov(P'')_j=1-\frac{n^-}{n}$. Combining the inequalities, we get $\mov(P)_j \le \mov(P')_j \le \mov(P'')_j=1-\frac{n^-}{n}$.

To complete the proof, note that 
$n-n^-$ voters have report $P_{ij} \ge \mov(P)_j$, by the definition of $N^-$.
Therefore,  $\overline{P}_j \ge (1-\frac{n^-}{n})\mov(P)_j$. We have
\begin{align*}
\mov(P)_j-\overline{P}_j &\le \mov(P)_j - \left(1-\frac{n^-}{n}\right)\mov(P)_j\\ 
&=\frac{n^-}{n}\mov(P)_j
\le \frac{n^-}{n}\left(1-\frac{n^-}{n}\right),
\end{align*}
which is at most $\frac{1}{4}$ when $n$ is even and at most $\frac{1}{4}\big(1 - \frac{1}{n^2}\big)$ when $n$ is odd. 
\end{proof}

We can easily verify that the Ladder mechanism satisfies proportionality: \citet[Section 5]{freeman2021truthful} argue that a moving phantom mechanism satisfies proportionality if there exists $t \in [0,1]$ such that $f_k(t)=1-\frac{k}{n}$ holds for all $k \in [n]_0$, which is the case for the Ladder mechanism when $t=1$. Hence, as an immediate corollary of \Cref{lem:upwards-deviation-egal}, we get that the overfunding guarantee of the Ladder mechanism is essentially optimal. We now turn to proving our main result, i.e., an essentially tight upper bound for the underfunding guarantee of the Ladder mechanism. We provide a proof sketch in \Cref{fig:main}.

\begin{figure}[t]
    \centering
    \scalebox{0.9}{
     \begin{tikzpicture}
        \draw[very thick] (0,0) -- (0,6);  
        \draw[very thick] (-.2,0) -- (0.2,0); 
        \draw[very thick] (-.2,6) -- (0.2,6); 
        \node at (-0.5,0) {$0$}; 
        \node at (-0.5,6) {$1$}; 
        \filldraw[fill=black!10,draw=none] (1,0) rectangle (2,6); 
        \filldraw[fill=black!10,draw=none] (2.5,0) rectangle (3.5,6); 
        \filldraw[fill=black!10,draw=none] (4,0) rectangle (5,6); 
        \node at (1.5,-0.5) {$1$};
        \node at (3,-0.5) {$2$};
        \node at (3.75,-0.5) {$\dots$};
        \node at (4.5,-0.5) {$m$};

        \filldraw[fill=color2!50,draw=none] (1,0.6) rectangle (2,6); 
        \node[rotate=90] at (1.5,4.5){(i) l.b. \# voters};
        \filldraw[fill=color3!50,draw=none] (2.5,1.2) rectangle (3.5,6); 
        \node[rotate=90] at (3,4.5){(iii) l.b. \# voters};
        \filldraw[fill=color3!50,draw=none] (4,0.6) rectangle (5,6); 
        \node[rotate=90] at (4.5,4.5){(iii) l.b. \# voters};

        \foreach \d/\e/\f in {0.1/0.2/0.1}{
        \draw[fill=white,thick] (1,\d*6-0.1) rectangle (2,\d*6+0.1); 
        \draw[fill=white,thick] (2.5,\e*6-0.1) rectangle (3.5,\e*6+0.1); 
        \draw[fill=white,thick] (4,\f*6-0.1) rectangle (5,\f*6+0.1); 
        }

        \draw[very thick,maincolor] (0.5,3) -- (5.5,3); 
        \node at (7,3) {(ii) u.b. $f_0(t^*)$}; 
        \draw[very thick,maincolor] (0.5,1.5) -- (5.5,1.5); 
        \draw[very thick,maincolor] (0.5,0) -- (5.5,0);         
        \draw[very thick,maincolor] (0.5,0) -- (5.5,0); 
        \draw[very thick,maincolor] (0.5,0) -- (5.5,0);

        \draw[very thick,draw=color2!50!black] (1,2.5) -- (2,2.5); 
        \node at (0.5,2.5) {\textcolor{color2!50!black}{$\overline{P}_1$}};         
    \end{tikzpicture}
}
    \caption{Proof sketch of \Cref{thm:main}: We assume for contradiction that the Ladder mechanism underfunds project $1$ by more than $\frac{1}{2}\big(1-\frac{1}{m}\big)$. We divide the proof into four steps: Using that the mean for project $1$ is high, in step (i), we derive a lower bound for the number of voters reporting a value in the interval $(\mov(P)_1,1]$ (indicated by green). (ii) 
    Since we know that the total number of phantoms and voters strictly above $\mov(P)_1$ is at most $n$,
    we derive an upper bound for the highest phantom at the point of normalization, i.e., $f_0(t^*)$. (iii) Building upon (ii), we can upper bound the number of phantoms within each interval $[\mov(P)_j,1]$ (indicated by orange) and thereby lower bound the number of voters reporting a value in the same interval. This in turn allows us to derive a lower bound on the mean of each project $j \neq 1$. (iv) Summing over all lower bounds on the mean implies a contradiction to the fact that the means sum up to $1$.}
    \label{fig:main}
\end{figure}
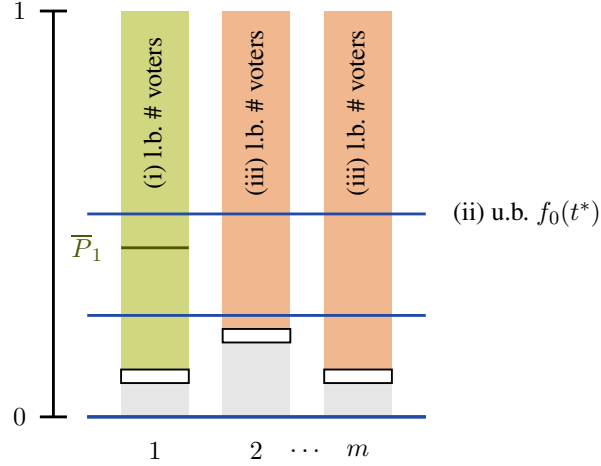

\begin{theorem} \label{thm:main}
    The Ladder mechanism underfunds by at most $\alpha$, where $$\alpha(n,m) = \frac{1}{2}\Big(1 -\frac{1}{m}\Big) \quad \text{ for all } n,m \in \mathbb{N}.$$
\end{theorem}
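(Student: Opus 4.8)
The plan is to argue by contradiction. Fix a project, relabel it as project $1$, and suppose \mov underfunds it by strictly more than $\delta := \frac12\bigl(1-\frac1m\bigr)$, i.e.\ $\overline{P}_1 - \mov(P)_1 > \delta$. Writing $t^*$ for the normalization time and $x_j := \mov(P)_j$, I will derive a lower bound on every mean $\overline{P}_j$ and show that these sum to strictly more than $1$, contradicting $\sum_j \overline{P}_j = 1$. Two elementary facts about the median underlie the whole argument: on each project at most $n$ of the $2n+1$ values (the $n$ reports plus the $n+1$ phantoms) lie strictly above $x_j$, and at least $n+1$ lie weakly above $x_j$. A first consequence of the strict version is that $x_j \le f_0(t^*) = t^*$ for every $j$, so no median ever exceeds the top phantom; this rules out the troublesome case $t^* < x_j$ later on.

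First (step (i)) I translate the high mean of project $1$ into many high reports. Let $a_1$ be the number of voters with $P_{i1} > x_1$. Bounding these reports by $1$ and all others by $x_1$ gives $\overline{P}_1 \le x_1 + \frac{a_1}{n}(1-x_1)$, so the contradiction hypothesis yields $a_1 > \frac{n\delta}{1-x_1}$ (this also forces $x_1 < 1$, so the division is legitimate). Next (step (ii)) I bound $t^*$ from above. Because the Ladder phantoms are equally spaced at gaps $1/n$, the number of phantoms strictly above $x_1$ is $\lceil n(t^*-x_1)\rceil \ge n(t^*-x_1)$; combined with the median cap $a_1 + \#\{\text{phantoms} > x_1\} \le n$ this gives $n(t^*-x_1) \le n-a_1$, i.e.\ $t^* \le x_1 + 1 - \frac{a_1}{n} < x_1 + 1 - \frac{\delta}{1-x_1}$.

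In step (iii) I run the same phantom count on each project $j \ne 1$, but from below. Since $x_j \le t^*$, the number of phantoms weakly above $x_j$ is at most $n(t^*-x_j)+1$, so the weak median fact forces at least $n(1-t^*+x_j)$ voters to report at least $x_j$ on project $j$. Each such voter contributes at least $x_j$ to the mean, giving $\overline{P}_j \ge (1-t^*+x_j)\,x_j$, a bound that also holds trivially when $x_j = 0$. Finally (step (iv)) I sum everything. Using $\sum_j x_j = 1$ together with the upper bound on $t^*$, the individual lower bounds collapse to $1 = \sum_j \overline{P}_j > 2\delta + \sum_j x_j^2$, and Cauchy--Schwarz gives $\sum_j x_j^2 \ge \frac1m$. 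Since $2\delta = 1-\frac1m$ was chosen precisely so that $2\delta + \frac1m = 1$, this reads $1 > 1$, the desired contradiction.

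I expect the main obstacle to be step (ii): converting ``large mean'' into ``small $t^*$'' requires carefully tracking strict-versus-weak inequalities and the ceiling arising from the discrete phantom count, and it is the one place where the equally-spaced structure of the Ladder phantoms (rather than mere proportionality) is genuinely used. The other steps are then routine, but it is worth flagging how tightly the argument closes: the threshold $\delta$ is calibrated so that the total accumulated slack exactly matches the $\frac1m$ by which $\sum_j x_j^2$ falls short of $1$, leaving no room to spare, which is also why the bound is tight.
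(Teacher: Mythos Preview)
Your proposal is correct and follows essentially the same four-step contradiction argument as the paper: (i) convert a high mean on project~$1$ into many voters strictly above the median, (ii) use the median cap and the Ladder spacing to upper-bound $t^*$, (iii) lower-bound each $\overline{P}_j$ by $(1-t^*+x_j)x_j$, and (iv) sum to a contradiction. The only notable difference is in the final algebra: the paper lower-bounds $\sum_{j\neq 1} x_j^2$ by $(1-x_1)^2/(m-1)$ and completes the square, whereas you add back $x_1^2$ and apply Cauchy--Schwarz to get $\sum_j x_j^2 \ge 1/m$, which makes the closing step slightly cleaner but is otherwise equivalent.
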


\begin{proof}
Let \mov be the Ladder mechanism and let $P$ be a preference profile. For the sake of contradiction, assume that there exists a project $j \in M$ with \begin{equation}
\overline{P}_j - \mov(P)_j>\frac{1}{2} - \frac{1}{2m}. \label{eq:contradiction-assumption}
\end{equation} We assume without loss of generality that $j=1$. 

We introduce the following notation: For simplicity, we write $a_j = \mov(P)_j$ for all $j \in M$. For a given project $j \in M$ and some interval $I \subseteq [0,1]$, we denote by $n_j(I)$ the number of agents within the interval $I$, i.e., $|\{i \in N \mid P_{ij} \in I\}|$. Similarly, we denote by $p(I)$ the number of phantoms in interval $I$, i.e., $|\{k \in [n]_0] \mid f_k(t^*) \in I\}|$, where $t^* \in [0,1]$ is some arbitrary point of normalization. 

\medskip 
\textbf{Step (i)} We start by showing that
\begin{equation}
    n_1((a_1,1]) \geq n \cdot \frac{\mu_1 - a_1}{1 - a_1}. \label{eq:claim1}
\end{equation}
This is because, given $n_1((a_1,1])$ voters with report strictly above $a_1$, the highest possible mean is attained when all of them report $1$ and the remaining voters report $a_1$. Formally, $$n_1((a_1,1]) \cdot 1 + (n- n_1((a_1,1])) \cdot a_1 \geq n \mu_1.$$ Rearranging this inequality yields \Cref{eq:claim1}.

\medskip

\textbf{Step (ii)} In this step, we derive an upper bound on the value of the highest phantom, i.e., $f_0(t^*)$. By definition of $a_1$ as the median on the first project, it holds that $n_1((a_1,1]) + p((a_1,1]) \leq n$, which yields an upper bound for the number of phantoms strictly above $a_1$. Formally, 
\begin{align*}
p((a_1,1]) &\leq n - n_1((a_1,1]) \\ 
& \stackrel{\cref{eq:claim1}}{\leq} n \cdot \frac{1-a_1}{1-a_1} - n \cdot \frac{\mu_1 - a_1}{1 - a_1}= n \cdot \frac{1-\mu_1}{1-a_1}. 
\end{align*}
Since any two consecutive phantoms are separated by a distance of at most $\frac{1}{n}$,
this yields an upper bound for the position of the highest phantom. Namely, the highest phantom is located at a position no greater than $a_1 + \frac{1-\mu_1}{1-a_1}$.

\medskip
\textbf{Step (iii)} We next derive lower bounds on the mean for any other project $j \in \{2,\dots,m\}$: 

\begin{equation}
\mu_j \geq \big(1-a_1 + a_j - \frac{1-\mu_1}{1-a_1} \big)a_j. \label{eq:lowerboundmean}
\end{equation}

As this bound clearly holds in the case that $a_j =0$, we can assume in the following case distinction that $a_j >0$.

\smallskip
\noindent \textbf{Case 1:} There is no phantom at $a_j$. 
We claim the following upper bound on the number of phantoms weakly above $a_j$: 
\begin{align*}
    p([a_j,1]) &\leq \Big\lceil n\big(a_1 + \frac{1-\mu_1}{1-a_1} - a_j \big) \Big\rceil  \\ & \leq n \big(a_1 + \frac{1-\mu_1}{1-a_1} -a_j \big) +1
\end{align*}
This bound holds because of the upper bound on the highest phantom and the fact that any two consecutive phantoms above $a_j$ have distance exactly $\frac{1}{n}$. The ceiling function comes from the fact that $a_j$ and the lowest phantom above $a_j$ may have smaller distance. 

\smallskip
\noindent \textbf{Case 2:} There is a phantom at $a_j$. We claim the following upper bound on the number of phantoms weakly above $a_j$: 
\[p([a_j,1]) \leq n \big(a_1 + \frac{1-\mu_1}{1-a_1} -a_j \big) + 1.\]
This bound holds because of the upper bound on the highest phantom and the fact that any two consecutive phantoms above $a_j$ have a distance exactly $\frac{1}{n}$. Moreover, since $a_j >0$, this minimum distance also holds for $a_j$ and the smallest phantom above $a_j$. The $+1$ comes from the phantom at $a_j$.

\smallskip

Note that we obtained the same bound in both cases. Now, since $n_j([a_j,1]) + p([a_j,1]) \geq n +1$ (as $a_j$ is the median of $2n+1$ values), we can use the above bound to derive a lower bound on the number of voters weakly above $a_j$:
\[n_j([a_j,1]) \geq n + 1 - p([a_j,1]) \geq n\big(1-a_1 - \frac{1-\mu_1}{1-a_1} + a_j \big).\]
The lowest mean for project $j$ is attained when $n_j([a_j,1])$ agents report $a_j$ and the remaining agents report $0$. Hence, we obtain the following lower bound on the mean: 
\[\mu_j \geq \big(1-a_1 + a_j - \frac{1-\mu_1}{1-a_1} \big)a_j.\]

\medskip
\textbf{Step (iv)} We complete the proof by summing over the mean of all projects. Using our bounds from step (iii), the assumption that project $1$ is underfunded, and that $\sum_{j=1}^ma_j=1$, we get: 

\begin{align*}
\mu_1 \! + \! \sum_{j = 2}^{m} \mu_2 \! & \stackrel{eq. (\ref{eq:lowerboundmean})}{\geq} \mu_1 + \sum_{j = 2}^{m} \big(1-a_1 + a_j - \frac{1-\mu_1}{1-a_1}\big)a_j \\ 
    & = \mu_1 + \left( 1-a_1 -\frac{1-\mu_1}{1-a_1} \right) \sum_{j = 2}^{m} a_j + \sum_{j = 2}^{m} a_j^2 \\ 
    & = 2\mu_1 -1 + (1-a_1)^2 + \sum_{j = 2}^{m} a_j^2 \\ 
    & \geq 2\mu_1 \! -1 + (1-a_1)^2 + (m-1)\! \left(\frac{1-a_1}{m-1}\right)^2 \\ 
    & \!\!\!\stackrel{eq.(\ref{eq:contradiction-assumption})}{>} 2\big( a_1 + \frac{1}{2}-\frac{1}{2m}\big) -1 + \frac{m}{m-1}(1-a_1)^2 \\ 
    & = 2a_1 - \frac{1}{m} + \frac{m}{m-1}(1-a_1)^2 \\ 
    & = \frac{m-1}{m} - 2(1-a_1) + \frac{m}{m-1}(1-a_1)^2 \! + 1 \\
    & = \Big(\sqrt{\frac{m-1}{m}} - \sqrt{\frac{m}{m-1}}(1-a_1)\Big)^2 \! + 1 \geq 1, 
\end{align*}
where the unlabeled inequality stems from the fact that $\sum_{j=2}^{m} a_j^2$ is the sum of convex functions, minimized when all values $a_j$ are equal. We obtained a contradiction to the fact that the sum of the means is $1$.
\end{proof}

\section{Implications for $\ell_1$ Distance}
\label{sec:l1}

\citet{caragiannis2022truthful} study the maximum 
$\ell_1$ distance between the output of any truthful budget aggregation mechanism and the mean.
More precisely, consider some budget aggregation mechanism $\mathcal{A}$. Then, for some real number\footnote{In this section, we focus on upper bounds that hold for specific values of $m$ and are independent of $n$, hence, for the sake of presentation we omit the parameterization of the upper bounds here.} $\alpha \in [0,2]$, $\mathcal{A}$ is said to be \emph{$\alpha$-approximate for $m$}, if $$\sum_{j \in [m]} |\mathcal{A}(P)_j - \overline{P}_j| \leq \alpha$$ holds for any instance $P$ with $m$ projects. \citet{caragiannis2022truthful} show that 
no moving phantom mechanism can achieve an approximation better than $1-\frac{1}{m}$ and that 
the Piecewise Uniform mechanism is $(\tfrac{2}{3} + \epsilon)$-approximate for $m=3$ and some $\epsilon \in [0,10^{-5}]$. Building upon our results from \Cref{sec:upper}, we are able to improve upon this result and show that the Ladder mechanism is $\frac{2}{3}$-approximate for $m=3$ as well as non-trivial upper bounds for larger $m$. The bounds follow from the following general result relating overfunding and underfunding to approximation guarantees.

\begin{lemma}
\label{prop:egal-to-l1}
    If a mechanism $\mathcal{A}$ overfunds by at most $\beta$ and underfunds by at most $\gamma$ (for fixed $m \in \mathbb{N}$ and all $n \in \mathbb{N}$), then $\mathcal{A}$ is $\alpha$-approximate for $m$, where 
    $$
    \alpha = 2 \cdot \max_{k \in [m-1]} \min \{ k\beta, (m-k)\gamma \}.
    $$
\end{lemma}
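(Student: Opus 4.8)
The plan is to reduce the $\ell_1$ distance to a single scalar—the total amount of overfunding—and then bound that scalar by a balancing argument between the two one-sided guarantees. For a fixed profile $P$ with $m$ projects, I would write $d_j = \mathcal{A}(P)_j - \overline{P}_j$ for the signed deviation of project $j$. Since both $\mathcal{A}(P)$ and $\overline{P}$ lie in $\Delta^{(m)}$, they each sum to $1$, so $\sum_{j \in [m]} d_j = 0$. Splitting this sum into its positive and negative parts shows that the total overfunding $S^+ = \sum_{j : d_j > 0} d_j$ equals the total underfunding $S^- = \sum_{j : d_j < 0} (-d_j)$. Consequently $\sum_{j \in [m]} |d_j| = S^+ + S^- = 2 S^+$, so it suffices to bound $S^+$.

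Next I would exploit the two guarantees. Let $k = |\{ j : d_j > 0 \}|$ be the number of strictly overfunded projects in this profile. The overfunding guarantee gives $d_j \le \beta$ for each of these $k$ projects, hence $S^+ \le k\beta$. On the other side, the at most $m - k$ projects with $d_j < 0$ each satisfy $-d_j \le \gamma$ by the underfunding guarantee, so $S^- \le (m-k)\gamma$; since $S^+ = S^-$, this yields $S^+ \le (m-k)\gamma$ as well. Combining the two bounds gives $S^+ \le \min\{ k\beta,\, (m-k)\gamma \}$ for the value of $k$ realized by $P$.

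Finally, since the number $k$ of overfunded projects depends on the profile and cannot be controlled, I would take the worst case over its possible values, obtaining $\sum_{j\in[m]}|d_j| = 2S^+ \le 2\max_{k}\min\{k\beta,(m-k)\gamma\}$ with $k$ ranging over $\{0,1,\dots,m\}$. The boundary terms $k=0$ and $k=m$ both make the inner minimum zero (as $\min\{0,\cdot\}=0$), so restricting to $k \in [m-1]$ leaves the maximum unchanged and yields exactly the claimed $\alpha$. The argument is mostly bookkeeping; the one genuine idea is the normalization observation that the simplex constraint forces total overfunding to equal total underfunding, which is precisely what lets a per-project guarantee on each side combine into a global $\ell_1$ bound. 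The only point needing mild care is that projects with $d_j = 0$ belong to neither side, which is why the count of underfunded projects is \emph{at most} $m-k$ rather than exactly $m-k$—but this only helps the inequality.
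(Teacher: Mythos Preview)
Your proof is correct and follows essentially the same approach as the paper: use normalization to equate total overfunding and total underfunding, bound each side by $k\beta$ and $(m-k)\gamma$ respectively, take the minimum, and then maximize over $k$. The only cosmetic difference is that the paper partitions projects into $\{j:\mathcal{A}(P)_j \ge \overline{P}_j\}$ and $\{j:\mathcal{A}(P)_j < \overline{P}_j\}$ (so the counts are exactly $k$ and $m-k$, and $k=0$ is ruled out), whereas you use strict inequality on both sides and handle the $d_j=0$ projects with an ``at most $m-k$'' remark; both treatments are fine.
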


\begin{proof}
    Consider some profile $P$. Suppose that $\mathcal{A}(P)_{j} \ge \overline{P}_j$ for $k$ projects and $\mathcal{A}(P)_j < \overline{P}_j$ for $m-k$ projects. Then,
    $$
    \sum_{j: \mathcal{A}(P)_j \ge \overline{P}_j} |\mathcal{A}(P)_j - \overline{P}_j| \le k \beta,
    $$
    and analogously, 
    $$
    \sum_{j: \mathcal{A}(P)_j < \overline{P}_j} |\mathcal{A}(P)_j - \overline{P}_j| \le (m-k)\gamma.
    $$
    Moreover, since both $\mathcal{A}(P)$ and $\overline{P}$ are normalized, 
    $$\sum_{j: \mathcal{A}(P)_j \ge \overline{P}_j} |\mathcal{A}(P)_j - \overline{P}_j|=\sum_{j: \mathcal{A}(P)_j < \overline{P}_j} |\mathcal{A}(P)_j - \overline{P}_j|,$$
    which together with the bounds above implies
    $$\sum_{j \in [m]} |\mathcal{A}(P)_j - \overline{P}_j| \leq 2 \cdot \min \{ k\beta, (m-k)\gamma \}.$$
    
    Taking the maximum over all possible choices of $k$ yields the result. Note that $k=0$ is impossible, while $k=m$ leads to an $\ell_1$ distance to the mean of $0$.
\end{proof}

The desired result is derived by applying \Cref{thm:main,lem:upwards-deviation-egal} and \Cref{prop:egal-to-l1}. We summarize our results in Table~\ref{tab:l1-summary}. 
Note that directly applying this approach for $m > 6$ gives upper bounds that are larger than the trivial upper bound of 2, which is why \Cref{cor:last} applies only to $3 \le m \le 6$.

\begin{table}
\begin{center}
\begin{tabular}{ c | c | c | c }
$m$ & Lo. Bound & Up. Bound & Previous Up. Bound \\ \hline
 $3$ & $2/3$ & $2/3$ & $2/3+ \epsilon$ \\
 $4$ & $3/4$ & $1$ & $2$  \\  
 $5$ & $4/5$ & $3/2$ & $2$  \\   
  $6$ & $5/6$ & $5/3$ & $2$ \\ 
\end{tabular}
\caption{A summary of our results for the worst case $\ell_1$ distance from the mean. Lower bounds (holding for any moving phantom mechanism) and the previous $m=3$ upper bound are from the work of~\citet{caragiannis2022truthful}. Other previous upper bounds are trivial. In the previous $m=3$ upper bound, $\epsilon$ is some small constant no larger than $10^{-5}$.}
\label{tab:l1-summary}
\end{center}
\end{table}

\begin{restatable}{corollary}{lbounds} \label{cor:last}
The Ladder mechanism is $\frac{2}{3}$-approximate for $m=3$, $1$-approximate for $m=4$, $\frac{3}{2}$-approximate for~$m=5$, and $\frac{5}{3}$-approximate for $m=6$.
\end{restatable}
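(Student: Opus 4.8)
The plan is to combine the overfunding and underfunding guarantees of the Ladder mechanism with \Cref{prop:egal-to-l1}, reducing the entire corollary to a finite arithmetic optimization. First I would record that the Ladder mechanism is proportional (established just after \Cref{lem:upwards-deviation-egal} via the condition $f_k(1)=1-\frac{k}{n}$), so \Cref{lem:upwards-deviation-egal} applies. Since the odd-$n$ overfunding bound $\frac{1}{4}\big(1-\frac{1}{n^2}\big)$ never exceeds $\frac{1}{4}$, the mechanism overfunds by at most $\beta=\frac{1}{4}$ for every $n$. Likewise, \Cref{thm:main} shows it underfunds by at most $\gamma=\frac{1}{2}\big(1-\frac{1}{m}\big)$, uniformly in $n$. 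Both bounds are thus of the form required by \Cref{prop:egal-to-l1}, namely fixed $m$ and all $n$.

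Next I would substitute $\beta=\frac{1}{4}$ and $\gamma=\frac{1}{2}\big(1-\frac{1}{m}\big)$ into the formula $\alpha=2\max_{k\in[m-1]}\min\{k\beta,(m-k)\gamma\}$ from \Cref{prop:egal-to-l1} and evaluate it separately for each $m\in\{3,4,5,6\}$. For each fixed $m$ this is a maximization of $\min\{k/4,\,(m-k)\cdot\frac{1}{2}(1-1/m)\}$ over the finitely many integers $k\in[m-1]$. I would simply tabulate the inner minimum at each $k$ and pick the maximizer; since $k/4$ is increasing in $k$ while $(m-k)\gamma$ is decreasing, the optimum sits at the crossover point, which can be read off directly.

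Carrying out the tabulation yields the maximizing index $k^*$ and value in each case: $m=3$ gives $k^*=2$ with inner value $\frac{1}{3}$, so $\alpha=\frac{2}{3}$; $m=4$ gives $k^*=2$ with inner value $\frac{1}{2}$, so $\alpha=1$; $m=5$ gives $k^*=3$ with inner value $\frac{3}{4}$, so $\alpha=\frac{3}{2}$; and $m=6$ gives $k^*=4$ with inner value $\frac{5}{6}$, so $\alpha=\frac{5}{3}$, matching the entries in \Cref{tab:l1-summary}.

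There is no genuine obstacle here: the mathematical content lives entirely in \Cref{thm:main} and \Cref{prop:egal-to-l1}, both of which I may assume. The only points requiring care are that the overfunding and underfunding bounds fed into \Cref{prop:egal-to-l1} must hold for all $n$ simultaneously (hence the clean value $\beta=\frac{1}{4}$ rather than the sharper odd-$n$ expression, which is harmless since enlarging $\beta$ only weakens the first argument of the inner minimum at the optimum), and that one correctly identifies the crossover index $k^*$ for each $m$. I would also remark, as the paper does, that this recipe becomes vacuous for $m\ge 7$ since the resulting $\alpha$ then exceeds the trivial bound of $2$, which is exactly why the statement is confined to $3\le m\le 6$.
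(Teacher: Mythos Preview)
Your proposal is correct and follows essentially the same approach as the paper: both feed the bounds $\beta=\tfrac14$ (from \Cref{lem:upwards-deviation-egal}, via proportionality of the Ladder mechanism) and $\gamma=\tfrac12\big(1-\tfrac1m\big)$ (from \Cref{thm:main}) into \Cref{prop:egal-to-l1} and then tabulate $2\min\{k\beta,(m-k)\gamma\}$ over $k\in[m-1]$ for each $m\in\{3,4,5,6\}$, arriving at the same maximizing indices $k^*=2,2,3,4$ and values $\tfrac23,1,\tfrac32,\tfrac53$.
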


\begin{proof}
    The ladder mechanism overfunds, for any $n \in \mathbb{N}$ and $m \in \mathbb{N}$ by at most $\beta = \tfrac{1}{4}$ and underfunds for any $n \in \mathbb{N}$ and $m\in \mathbb{N}$ by at most $\gamma = \tfrac{1}{2}-\tfrac{1}{2m}$. For $m=3$, we have
    \begin{align*}
        2\min \{ k\beta, (m-k)\gamma \} &= 2\min \{ \tfrac{1}{4}, \tfrac{2}{3} \} = \tfrac{1}{2} \text{ for } k=1\\
        2\min \{ k\beta, (m-k)\gamma \} &= 2\min \{ \tfrac{1}{2}, \tfrac{1}{3} \} = \tfrac{2}{3} \text{ for } k=2,
    \end{align*}
   hence the ladder mechanism is $\tfrac{2}{3}$-approximate.

    For $m=4$, we have
    \begin{align*}
        2\min \{ k\beta, (m-k)\gamma \} &= 2\min \{ \tfrac{1}{4}, \tfrac{9}{8} \} = \tfrac{1}{2} \text{ for } k=1\\
        2\min \{ k\beta, (m-k)\gamma \} &= 2\min \{ \tfrac{1}{2}, \tfrac{3}{4} \} = 1 \text{ for } k=2 \\
        2\min \{ k\beta, (m-k)\gamma \} &= 2\min \{ \tfrac{3}{4}, \tfrac{3}{8} \} = \tfrac{3}{4} \text{ for } k=3,
    \end{align*}
    hence the ladder mechanism is $1$-approximate.

    For $m=5$, we have
    \begin{align*}
        2\min \{ k\beta, (m-k)\gamma \} &= 2\min \{ \tfrac{1}{4}, \tfrac{16}{10} \} = \tfrac{1}{2} \text{ for } k=1\\
        2\min \{ k\beta, (m-k)\gamma \} &= 2\min \{ \tfrac{1}{2}, \tfrac{12}{10} \} = 1 \text{ for } k=2\\
        2\min \{ k\beta, (m-k)\gamma \} &= 2\min \{ \tfrac{3}{4}, \tfrac{8}{10} \} = \tfrac{3}{2} \text{ for } k=3\\
        2\min \{ k\beta, (m-k)\gamma \} &= 2\min \{ 1, \tfrac{4}{10} \} = \tfrac{4}{5} \text{ for } k=4,
    \end{align*}
    hence the ladder mechanism is $\frac{3}{2}$-approximate.

    For $m=6$, we have
    \begin{align*}
        2\min \{ k\beta, (m-k)\gamma \} &= 2\min \{ \tfrac{1}{4}, \tfrac{25}{12} \} = \tfrac{1}{2} \text{ for } k=1\\
        2\min \{ k\beta, (m-k)\gamma \} &= 2\min \{ \tfrac{1}{2}, \tfrac{20}{12} \} = 1 \text{ for } k=2\\
        2\min \{ k\beta, (m-k)\gamma \} &= 2\min \{ \tfrac{3}{4}, \tfrac{15}{12} \} = \tfrac{3}{2} \text{ for } k=3\\
        2\min \{ k\beta, (m-k)\gamma \} &= 2\min \{ 1, \tfrac{10}{12} \} = \tfrac{5}{3} \text{ for } k=4\\
        2\min \{ k\beta, (m-k)\gamma \} &= 2\min \{ \tfrac{5}{4}, \tfrac{5}{12} \} = \tfrac{5}{6} \text{ for } k=5,
    \end{align*}
    hence the ladder mechanism is $\frac{5}{3}$-approximate.
\end{proof}

\section{Discussion}
\label{sec:discussion}

We introduce the notion of project fairness for the budget aggregation problem, defined by the maximum difference between the funding that a project receives and the funding that it would have received under the mean division of the budget. Our main technical contribution is to define the Ladder mechanism and show that it achieves essentially tight project fairness bounds. Additionally, our result yields a guarantee on the maximum $\ell_1$ distance between the output of the Ladder mechanism and the mean division, which is optimal for $m=3$ and the first non-trivial guarantees for $m \in \{4,5,6\}$.

Several open questions remain. Perhaps most intriguing is whether we can achieve better project-fairness guarantees with strategyproof mechanisms that are not moving phantom mechanisms (a project fairness lower bound of $\alpha(n,m)=\frac{1}{4}$ follows directly from an argument of \citet[Theorem 6]{caragiannis2022truthful}, but this is far from our upper bounds). Of course, resolving this question in the affirmative would require a resolution to the question of whether moving phantom mechanisms comprise the complete space of strategyproof mechanisms in this setting. It would also be interesting to characterize the class of optimal project-fair moving phantom mechanisms. Finally, it may be possible to tighten our analysis in Section~\ref{sec:l1}. We have made use only of the project fairness bounds locally on each project, but perhaps a more global analysis would do better.

\section*{Acknowledgments}
Ulrike Schmidt-Kraepelin was supported by the \emph{Deutsche Forschungsgemeinschaft} (under grant BR 4744/2-1), the \emph{Centro de Modelamiento Matemático (CMM)} (under grant FB210005, BASAL funds for center of excellence from ANID-Chile) and \emph{ANID-Chile} (grant ACT210005). Moreover, she was supported by the National Science Foundation under Grant No. DMS-1928930 and by the Alfred P. Sloan Foundation under grant G-2021-16778 while being in residence at the Simons Laufer Mathematical Sciences Institute (formerly MSRI) in Berkeley, California, during the Fall 2023 semester.

We thank an anonymous AAAI reviewer for catching a mistake in an earlier version of Proposition 1.

\newpage

\bibliography{literature}

\begin{thebibliography}{27}
\providecommand{\natexlab}[1]{#1}
\providecommand{\url}[1]{\texttt{#1}}
\expandafter\ifx\csname urlstyle\endcsname\relax
  \providecommand{\doi}[1]{doi: #1}\else
  \providecommand{\doi}{doi: \begingroup \urlstyle{rm}\Url}\fi

\bibitem[Airiau et~al.(2023)Airiau, Aziz, Caragiannis, Kruger, Lang, and
  Peters]{airiau2023portioning}
S.~Airiau, H.~Aziz, I.~Caragiannis, J.~Kruger, J.~Lang, and D.~Peters.
\newblock Portioning using ordinal preferences: Fairness and efficiency.
\newblock \emph{Artificial Intelligence}, 314:\penalty0 103809, 2023.

\bibitem[Aziz and Shah(2021)]{aziz2021participatory}
H.~Aziz and N.~Shah.
\newblock Participatory budgeting: Models and approaches.
\newblock \emph{Pathways Between Social Science and Computational Social
  Science: Theories, Methods, and Interpretations}, pages 215--236, 2021.

\bibitem[Aziz et~al.(2021)Aziz, Lam, Lee, and Walsh]{aziz2021strategyproof}
H.~Aziz, A.~Lam, B.~E. Lee, and T.~Walsh.
\newblock Strategyproof and proportionally fair facility location.
\newblock \emph{arXiv preprint arXiv:2111.01566}, 2021.

\bibitem[Barber{\`a} et~al.(1993)Barber{\`a}, Gul, and
  Stacchetti]{barbera1993generalized}
S.~Barber{\`a}, F.~Gul, and E.~Stacchetti.
\newblock Generalized median voter schemes and committees.
\newblock \emph{Journal of Economic Theory}, 61\penalty0 (2):\penalty0
  262--289, 1993.

\bibitem[Barber{\`a} et~al.(1997)Barber{\`a}, Mass{\'o}, and
  Neme]{barbera1997voting}
S.~Barber{\`a}, J.~Mass{\'o}, and A.~Neme.
\newblock Voting under constraints.
\newblock \emph{Journal of Economic Theory}, 76\penalty0 (2):\penalty0
  298--321, 1997.

\bibitem[Bogomolnaia et~al.(2005)Bogomolnaia, Moulin, and
  Stong]{bogomolnaia2005collective}
A.~Bogomolnaia, H.~Moulin, and R.~Stong.
\newblock Collective choice under dichotomous preferences.
\newblock \emph{Journal of Economic Theory}, 122\penalty0 (2):\penalty0
  165--184, 2005.

\bibitem[Border and Jordan(1983)]{border1983straightforward}
K.~C. Border and J.~S. Jordan.
\newblock Straightforward elections, unanimity and phantom voters.
\newblock \emph{The Review of Economic Studies}, 50\penalty0 (1):\penalty0
  153--170, 1983.

\bibitem[Brandl et~al.(2021)Brandl, Brandt, Peters, and
  Stricker]{brandl2021distribution}
F.~Brandl, F.~Brandt, D.~Peters, and C.~Stricker.
\newblock Distribution rules under dichotomous preferences: Two out of three
  ain't bad.
\newblock In \emph{Proceedings of the 22nd ACM Conference on Economics and
  Computation}, pages 158--179, 2021.

\bibitem[Cabannes(2004)]{cabannes2004participatory}
Y.~Cabannes.
\newblock Participatory budgeting: a significant contribution to participatory
  democracy.
\newblock \emph{Environment and urbanization}, 16\penalty0 (1):\penalty0
  27--46, 2004.

\bibitem[Caragiannis et~al.(2016)Caragiannis, Procaccia, and
  Shah]{caragiannis2016truthful}
I.~Caragiannis, A.~Procaccia, and N.~Shah.
\newblock Truthful univariate estimators.
\newblock In \emph{Proceedings of the 33rd International Conference on Machine
  Learning}, pages 127--135, 2016.

\bibitem[Caragiannis et~al.(2022)Caragiannis, Christodoulou, and
  Protopapas]{caragiannis2022truthful}
I.~Caragiannis, G.~Christodoulou, and N.~Protopapas.
\newblock Truthful aggregation of budget proposals with proportionality
  guarantees.
\newblock In \emph{Proceedings of the 36th AAAI Conference on Artificial
  Intelligence}, pages 4917--4924, 2022.

\bibitem[Elkind et~al.(2023)Elkind, Suksompong, and Teh]{elkind2023settling}
E.~Elkind, W.~Suksompong, and N.~Teh.
\newblock Settling the score: Portioning with cardinal preferences.
\newblock In \emph{Proceedings of the 26th European Conference on Artificial
  Intelligence}, pages 621--628. 2023.

\bibitem[Fain et~al.(2016)Fain, Goel, and Munagala]{fain2016core}
B.~Fain, A.~Goel, and K.~Munagala.
\newblock The core of the participatory budgeting problem.
\newblock In \emph{Proceedings of the 12th International Conference on Web and
  Internet Economics}, pages 384--399, 2016.

\bibitem[Freeman et~al.(2021)Freeman, Pennock, Peters, and
  Vaughan]{freeman2021truthful}
R.~Freeman, D.~M. Pennock, D.~Peters, and J.~W. Vaughan.
\newblock Truthful aggregation of budget proposals.
\newblock \emph{Journal of Economic Theory}, 193:\penalty0 105234, 2021.

\bibitem[Goel et~al.(2019)Goel, Krishnaswamy, Sakshuwong, and
  Aitamurto]{goel2019knapsack}
A.~Goel, A.~K. Krishnaswamy, S.~Sakshuwong, and T.~Aitamurto.
\newblock Knapsack voting for participatory budgeting.
\newblock \emph{ACM Transactions on Economics and Computation}, 7\penalty0
  (2):\penalty0 1--27, 2019.

\bibitem[Goyal et~al.(2023)Goyal, Sakshuwong, Sarmasarkar, and
  Goel]{goyal2023low}
M.~Goyal, S.~Sakshuwong, S.~Sarmasarkar, and A.~Goel.
\newblock {Low Sample Complexity Participatory Budgeting}.
\newblock In \emph{Proceedings of the 50th International Colloquium on
  Automata, Languages, and Programming}, pages 70:1--70:20, 2023.

\bibitem[Jennings et~al.(2023)Jennings, Laraki, Puppe, and
  Varloot]{jennings2023new}
A.~B. Jennings, R.~Laraki, C.~Puppe, and E.~M. Varloot.
\newblock New characterizations of strategy-proofness under single-peakedness.
\newblock \emph{Mathematical Programming}, pages 1--32, 2023.

\bibitem[Lindner et~al.(2008)Lindner, Nehring, and Puppe]{lindner2008midpoint}
T.~Lindner, K.~Nehring, and C.~Puppe.
\newblock Allocating public goods via the midpoint rule.
\newblock In \emph{Proceedings of the 9th International Meeting of the Society
  for Social Choice and Welfare}, 2008.

\bibitem[Michorzewski et~al.(2020)Michorzewski, Peters, and
  Skowron]{michorzewski2020price}
M.~Michorzewski, D.~Peters, and P.~Skowron.
\newblock Price of fairness in budget division and probabilistic social choice.
\newblock In \emph{Proceedings of the 34th AAAI Conference on Artificial
  Intelligence}, pages 2184--2191, 2020.

\bibitem[Moulin(1980)]{moulin1980strategy}
H.~Moulin.
\newblock On strategy-proofness and single peakedness.
\newblock \emph{Public Choice}, pages 437--455, 1980.

\bibitem[Nehring and Puppe(2019)]{nehring2019resource}
K.~Nehring and C.~Puppe.
\newblock Resource allocation by frugal majority rule.
\newblock Technical report, KIT Working Paper Series in Economics, 2019.

\bibitem[Peters et~al.(1992)Peters, van~der Stel, and
  Storcken]{peters1992pareto}
H.~Peters, H.~van~der Stel, and T.~Storcken.
\newblock Pareto optimality, anonymity, and strategy-proofness in location
  problems.
\newblock \emph{International Journal of Game Theory}, 21:\penalty0 221--235,
  1992.

\bibitem[Procaccia and Tennenholtz(2013)]{procaccia2013approximate}
A.~D. Procaccia and M.~Tennenholtz.
\newblock Approximate mechanism design without money.
\newblock \emph{ACM Transactions on Economics and Computation}, 1\penalty0
  (4):\penalty0 1--26, 2013.

\bibitem[Renault and Trannoy(2005)]{renault2005protecting}
R.~Renault and A.~Trannoy.
\newblock Protecting minorities through the average voting rule.
\newblock \emph{Journal of Public Economic Theory}, 7\penalty0 (2):\penalty0
  169--199, 2005.

\bibitem[Renault and Trannoy(2011)]{renault2011assessing}
R.~Renault and A.~Trannoy.
\newblock Assessing the extent of strategic manipulation: the average vote
  example.
\newblock \emph{SERIEs}, 2:\penalty0 497--513, 2011.

\bibitem[Varloot and Laraki(2022)]{varloot2022level}
E.~M. Varloot and R.~Laraki.
\newblock Level-strategyproof belief aggregation mechanisms.
\newblock In \emph{Proceedings of the 23rd ACM Conference on Economics and
  Computation}, pages 335--369, 2022.

\bibitem[Wagner and Meir(2023)]{wagner2023strategy}
J.~Wagner and R.~Meir.
\newblock Strategy-proof budgeting via a vcg-like mechanism.
\newblock In \emph{Proceedings of the 16th International Symposium on
  Algorithmic Game Theory}, pages 401--418, 2023.

\end{thebibliography}
\bibliographystyle{abbrvnat}

\end{document}